\documentclass[preprint,10pt]{elsarticle}

%THIS REMOVES THE PREPRINT THING
\makeatletter
\def\ps@pprintTitle{%
\let\@oddhead\@empty
\let\@evenhead\@empty
\def\@oddfoot{\centerline{\thepage}}%
\let\@evenfoot\@oddfoot}
\makeatother

\journal{Physica D}

\usepackage{graphicx,bbm,setspace,amssymb,amsfonts,amsmath,mathtools,mathrsfs,stmaryrd,amsthm,bm,etoolbox,comment,hyperref,url,caption,subcaption,color,enumitem,algorithm,booktabs,microtype,nicefrac,lingmacros,nccmath,tree-dvips,tikz-cd}
\usepackage[noend]{algpseudocode}
\usepackage{algorithm}

\usepackage[colorinlistoftodos]{todonotes}
\usepackage[utf8]{inputenc} % allow utf-8 input
\usepackage[T1]{fontenc}    % use 8-bit T1 fonts

\DeclareMathOperator*{\argmax}{argmax}
\usepackage[noend]{algpseudocode}

\newtheorem{theorem}{Theorem}[section]
\newtheorem{thm}[theorem]{Theorem}
\newtheorem{thm-defn}[theorem]{Theorem/Definition}
\newtheorem{lemma}[theorem]{Lemma}
\newtheorem{prop}[theorem]{Proposition}

\theoremstyle{definition}
\newtheorem{theorem1}{Theorem}[section]
\newtheorem{definition}[theorem1]{Definition}

\theoremstyle{remark}
\newtheorem{remark}[theorem]{Remark}

\newcommand{\ignore}[1]{}{}
\DeclareMathOperator{\magn}{mag}

\begin{document}

\begin{frontmatter}

\title{Equivalence relations and $L^p$ distances between time series with application to the Black Summer Australian bushfires}

\author[label1]{Nick James} 
\author[label2]{Max Menzies} \ead{max.menzies@alumni.harvard.edu}
\address[label1]{School of Mathematics and Statistics, University of Melbourne, Victoria, Australia}
\address[label2]{Beijing Institute of Mathematical Sciences and Applications, Tsinghua University, Beijing, China}

\begin{abstract}
This paper introduces a new framework of algebraic equivalence relations between time series and new distance metrics between them, then applies these to investigate the Australian ``Black Summer'' bushfire season of 2019-2020. First, we introduce a general framework for defining equivalence between time series, heuristically intended to be equivalent if they differ only up to noise. Our first specific implementation is based on using change point algorithms and comparing statistical quantities such as mean or variance in stationary segments. We thus derive the existence of such equivalence relations on the space of time series, such that the quotient spaces can be equipped with a metrizable topology. Next, we illustrate specifically how to define and compute such distances among a collection of time series and perform clustering and additional analysis thereon. Then, we apply these insights to analyze air quality data across New South Wales, Australia, during the 2019-2020 bushfires. There, we investigate structural similarity with respect to this data and identify locations that were impacted anonymously by the fires relative to their location. This may have implications regarding the appropriate management of resources to avoid gaps in the defense against future fires.

\end{abstract}

\begin{keyword}
Bushfires \sep Time series analysis \sep Metric spaces \sep Equivalence relations

\end{keyword}

\end{frontmatter}

\section{Introduction}
\label{sec:intro}

We begin with the following general question: can one define equivalence relations between time series, whereby two time series are declared equivalent if they differ only up to noise? This is our guiding heuristic; in this paper, we give one concrete way to understand this.

We begin with motivation: in set theory and throughout mathematics, an equivalence relation $\sim$ on a set $\mathcal{X}$ can simplify the set. Indeed, an equivalence relation partitions the original set into equivalence classes - subsets of $\mathcal{X}$ in which everything is mutually equivalent. By considering the set of equivalence classes $\mathcal{X}/\sim$, equivalent elements of the original $\mathcal{X}$ have been collapsed or divided down to the same element of $\mathcal{X}/\sim$; for this reason, the latter is called the quotient set.

If we wish to understand time series as equivalent if they differ only up to noise, and then wish to define distance measures that ``cancel out the noise'', we would like to define equivalence relations $\sim$ such that the quotient set $\mathcal{X}/\sim$ is \emph{metrizable}, namely can be imbued with an appropriate metric. Then, two equivalence classes of time series $A$ and $B$ would have an unambiguous distance $d$ apart. That is, any time series $X_t$ in the $A$ class and any time series $Y_t$ in the $B$ class would have a noise-invariant distance $d$ apart.

To concretely define these equivalence relations, we take the posture that time series consist of locally stationary segments separated by change points (or structural breaks) at which the statistical properties of the time series change, a frequent assumption in the statistical literature \cite{Dahlhaus1997}. If two time series have the exact same structural breaks and appropriately chosen statistical properties (up to completely identical distributions) in each of the segments, then the difference between them is random noise. This is of course just one way to build a framework envisioned by the paragraphs above.

 The following theorem proves that the above goals are possible. Formally, it combines Theorems \ref{thm:mean_theorem} and \ref{thm:var_theorem}:

\begin{thm}
\label{thm:main_theorem}
There exist equivalence relations $\sim$ on the space $\mathfrak{X}$ of time series, obtained from algorithmic determination of change points, such that the resulting quotient spaces $\mathfrak{X}/ \mathord{\sim}$ are metrizable. Further, such metrics are continuous with respect to perturbation of the time series. % (lemma \ref{perturbation property lemma}).
\end{thm}

The proof and construction of such equivalence relations build on several insights: the change point algorithms developed by \cite{Hawkins1977,Hawkins2003} and \cite{Ross2014}, the distances between change points of \cite{James2020_nsm}, and the construction of Lebesgue $L^p$ spaces as metrizable quotients by \cite{Riesz}.

The most significant equivalence relation between functions in measure theory is that of almost everywhere equivalence: define  $f \sim g$ if $f=g$ a.e., that is, $\{x:f(x)\neq g(x)\}$ is measure zero. We rely on this crucial equivalence relation for both the motivation and mathematics of our theorem and construction. 

We recall the construction of the $L^p$ spaces. Let $\mathcal{L}^p([0,H])$ be the vector space of all Lebesgue-measurable functions $f$ on $[0,H]$ such that $|f|^p$ has finite integral. We have a quotient mapping, defined and outlined in \cite{Billingsley},
\begin{align*}
  \mathcal{L}^p([0,H]) &\to \mathcal{L}^p([0,H])/ \mathord{\sim}=: L^p([0,H]),\\
  f &\mapsto [f].
\end{align*}
We will use this notation in what follows. One then equips this quotient space with the $L^p$ norms, as defined in \cite{Billingsley} or in Section \ref{sec:proofs} for our specific case.

\subsection{Overview of change point detection}

Many domains in the physical and social sciences are interested in the identification of change points/structural breaks in time series data. Introduced by \cite{Hawkins1977} and developed further by \cite{Hawkins2003,Hawkins2005}, change point algorithms are an evolving field of algorithms intended to determine such breaks at change points, when a change in the statistical properties of a time series has occurred.

In the statistical literature, focused on time series data, researchers have developed change point models driven by hypothesis tests, where $p$-values allow scientists to quantify the confidence in their algorithms \citep{Moreno2013, Bridges2015, Peel2015}. Change point algorithms generally fall within Bayesian or hypothesis testing frameworks. Bayesian change point algorithms \citep{Barry1993,Xuan2007,Adams2007,james2021_spectral} identify change point in a probabilistic manner and allow for subjectivity through the selection of prior distributions, but suffer from hyperparameter sensitivity and do not provide statistical error bounds or $p$-values, often leading to a lack of reliability.

Within hypothesis testing, \cite{RossCPM} outlines algorithmic developments in various change point models initially proposed by \cite{Hawkins1977}. Some of the more important developments in recent years include the work of \cite{Hawkins2003} and \citep{Ross2012,Ross2013,Ross2014}. Ross \cite{RossCPM} recently introduced the CPM package, which allows for flexible implementation of various change point models on time series data. Given the package's flexibility and efficient implementation, we build our methodology on this suite of algorithms.

There are also alternative approaches to change point analysis outside the statistical literature. For example, \cite{Id2005} assigns to a time series a change point score $z(t)$ that changes with time. Specifically, it measures the dissimilarity of the span of future sequential vectors of time series data with sequential vectors of past data, using matrix algebra.

\subsection{Overview of distance measures}

The application of metric spaces has provided the groundwork for research advancement in numerous fields. Within time series analysis, new applications of metrics between time series have been found by various authors \cite{Moeckel1997,Dose2005,Basalto2007,Basalto2008,Szkely2007,Mendes2018,Mendes2019}. Outside time series, the use of metrics has become a popular topic within the field of both statistics and machine learning, where a distance function is optimally tuned for a candidate task. There have been numerous applications, including computer vision \cite{Hua2007,Snavely2006}, text analysis \cite{Davis2008, Lebanon2006}, program analysis \cite{Ha2007} and many others \cite{Thorpe2017,Memoli2011,Arjovsky2017,Xing2002,Zhao2005}.

Numerous areas of research have found it fruitful to consider more general implementations of distance functions and measures, such as semi-metrics. Within time series analysis, A{\ss}falg et al. \cite{Afalg2007} use metrics restricted to particular time intervals to focus on discrepancy at key times, as may be informed by the field of application. The same authors develop a framework of ``threshold query execution'', where they restrict time series to unbroken intervals consistently above a certain cutoff threshold $\tau$, and may then measure distances based on the length and location of these threshold intervals \cite{Afalg2006,Afalg2006_2,Assfalg2006_3}, also proposing ways to best select $\tau$ \cite{Assfalg2006_4}. Batista et al. give an overview of measures between time series and various desirable properties of invariance they may have \cite{Batista2013}. In particular, they introduce a new complexity-adjusted distance between time series designed to satisfy a property of complexity-invariance. They take an elegant approach that simply multiplies the Euclidean metric between two time series by an ordered ratio of the time series' complexities, producing a semi-metric that satisfies a $\rho$-adjusted form of the triangle inequality. Too many authors to list have used dynamic time warping (DTW) between time series \cite{Keogh2003,Ding2008,Keogh2008_DTW,Alon2009_DTW,VanLaerhoven2009_DTW,Zhang2011_DTW}, with numerous authors proposing highly efficient means to speed up the algorithm \cite{SparseDTW,Rakthanmanon2012}.

Outside time series analysis, semi-metrics have proven particularly useful to measure discrepancy between sets of points, and have applications in image analysis \citep{Baddeley1992,Dubuisson1994,Gardner2014}, fuzzy sets \citep{Brass2002,Fujita2013,Gardner2014,Rosenfeld1985} and improved computation in such tasks \citep{Eiter1997,Atallah1983,Atallah1991,Shonkwiler1989}. An overview of such applications is given by \cite{Conci2017}.

In our previous work \cite{James2020_nsm}, we conduct a computational analysis of various existing metrics and semi-metrics between finite sets, both the traditional metrics such as Hausdorff and Wasserstein, and the more recently introduced semi-metrics used in applications above. We also develop a new semi-metric between finite sets, show some desirable properties compared to existing options, and apply this to measuring distance between time series' sets of structural breaks. Our methodology has two downsides: these semi-metrics do not obey the triangle inequality, and the distance computation does not record any of the time series' attributes between these structural breaks. It really reduces entirely from a distance between time series to a semi-metric between sets. Similar work has been undertaken among some of the aforementioned authors, but their approaches generally share the same limitations as our previous work. The threshold query execution of A{\ss}falg et al. \cite{Afalg2006,Afalg2006_2,Assfalg2006_3} may be used to create a distance between time series, but their proposed semi-metric does not use any of the data within the thresholded intervals. While they do not explicitly use it as a distance measure, the $z(t)$ score of \cite{Id2005} could be used to compare time series, but would also lose track of the values taken by the time series. And \cite{Afalg2007} simply uses pre-entered intervals to compare distance, which must be the same for each time series - it does not aim to detect or use structural breaks.

In this paper, we develop new procedures for measuring discrepancy between time series based on structural breaks, simultaneously ameliorating these downsides and proving Theorem \ref{thm:main_theorem} in the process. We associate to a time series a piecewise constant function that contains the data of the change points as well as the mean, variance, or other desired property, and embed these piecewise constant functions within $L^p([0,H])$. This procedure is highly flexible, building upon any available change point algorithm and recording any desired statistical property.

We also incorporate clustering and outlier detection, drawing upon numerous approaches to time series analysis that have been successfully applied in numerous disparate fields such as epidemiology \cite{Manchein2020,Li2021_Matjaz,james2022_covidinfectivity,Blasius2020,james2021_TVO,Perc2020,Machado2020}, environmental sciences \cite{james2022_CO2,Khan2020,james2021_hydrogen}, finance \cite{Drod2021_entropy,james_georg,Liu1997,Wtorek2021_entropy,james_arjun,Drod2001,james2022_stagflation,Gbarowski2019,james2021_MJW}, cryptocurrencies \cite{Sigaki2019,Drod2020_entropy,James2021_crypto2,Drod2020,Wtorek2020}, crime \cite{james2022_guns,Perc2013,james2023_terrorist}, the arts \cite{Sigaki2018_art,Perc2020_art}, and other fields \cite{Ribeiro2012,Merritt2014,james2021_olympics,Clauset2015}.

\subsection{Overview of the Australian bushfires and related environmental science}
\label{subsec:bushfire_intro}

The 2019-2020 bushfire season in New South Wales (NSW), Australia, attracted international attention. Multiple states of emergency were declared across the state \cite{fires_emergency,fires_emergency_2}, over five million hectares were burned \cite{Boer2020}, and the NSW Rural Fire Service stated ``2019/20 was the most devastating bush fire season in NSW history, truly unparalleled in more ways than one'' \cite{bushfire_bulletin}. Apart from the immediate damage, the bushfires subjected the entire state to poor, and often hazardous, air quality \cite{Ryan2021,Walter2020,Jalaludin2020,Vardoulakis2020}, which has been described as unprecedented by several researchers \cite{Johnston2020,BorchersArriagada2020}. Even after the pollution subsided, it may have caused significant damage to the ozone layer, causing further health risks \cite{Utembe2018,Bernath2022}.

As air quality data is a significant barometer of air pollution health risk and a proxy for the severity of bushfires nearby, we analyze air quality index (AQI) time series data from the New South Wales Department of Planning and Environment \cite{bushfireAQIdata}. We examine hourly data over a three-month window, October 20, 2019 - January 20, 2020, measured at 52 stations across the state, whose location is also recorded. This period represents the peak of the bushfire season, where poor AQI data was observed rather uniformly across the entire state. Our paper builds on a growing literature of using supervised and unsupervised learning methods to analyze air quality data, including random forest algorithms \cite{Grange2018,Grange2019}, generalized additive models \cite{Westmoreland2007}, principal component analysis \cite{Derwent1995}, and non-parametric smoothing \cite{Libiseller2005}. As bushfire events in Australia and elsewhere continue to increase in prevalence and intensity \cite{Sharples2016}, we hope such mathematical analysis of air quality and other data may continue to inform policymakers and the public of the precise dangers of such pollution, and allow informed decision-making and mitigation in advance.\\
 
This paper is structured as follows: in Section \ref{sec:proofs}, we construct the equivalence relations, associated quotient spaces, and metrics between time series, and prove all pertinent theorems. In Section \ref{sec:simulation}, we investigate desirable properties of our metrics relative to existing options, by both theoretical propositions and simulation. In Section \ref{sec:analysis_collections}, we describe our methodology for analysis of a collection of time series with our new metrics, including in a cross-contextual setting. In Section \ref{sec:bushfires}, we perform our analysis on the NSW air quality index time series. Section \ref{sec:conclusion} summarizes implications of our methodology and the findings regarding bushfires in Australia and elsewhere.

\section{Proof of theorems and construction}
\label{sec:proofs}

Let $(X_t)_{t \in I}$ be a real-valued time series over some time interval $I=\{0,1,...,H\}$, and $\mathfrak{X}$ the space of such time series. In this paper, we only record each data series once, so treat each $X_{t_i}$ as a single observed real number. Let $P_f \mathbb{R}$ be the set of finite subsets of $\mathbb{R}$. Throughout the paper, $p \geq 1$ is a chosen real number.

First, one selects a statistical attribute, such as mean, variance, or higher-order moment. For simplicity of exposition, we describe the procedure for the mean. We use the two-step algorithm of \cite{RossCPM} to determine a set of change points $S=\{c_1,...,c_m\}$ with respect to changes in the mean. This is an inductive deterministic procedure that uses iterated hypothesis testing. At each step, the null hypothesis of no change point existing is tested, and subject to preset parameters, the null hypothesis may be ruled invalid; this determines a change point. In addition, we set $c_0=0, c_{m+1}=H$. Properly interpreted, this gives a function
\begin{align}
CPM: \mathfrak{X} \to P_f \mathbb{R}.
\end{align}
Different parameters yield different algorithms and hence different such functions. This procedure has algorithmically determined that $(X_t)$ has a constant mean over each interval 
$[c_{i},c_{i+1}],i=0,...,m$. By using the change point algorithm or simply averaging $(X_t)$ over these intervals, we compute and record the empirically observed means $\mu_i$. Now we define a function 
\begin{align}
f= \sum_{i=0}^m \mu_i \mathbbm{1}_{(c_{i},c_{i+1})}
\end{align}
as a weighted indicator function. This is a piecewise constant function that codifies the change points of the time series and its changing means. That is,
\begin{align}
    f(x) =
    \begin{cases}
      \mu_{i} \text{ if } x \in (c_i,c_{i+1}),\\
      0 \text{ otherwise. }   \\
    \end{cases}
\end{align}
Let $\mathcal{F}(X_t):=[f] \in L^p([0,H])$ be the image of $f$ under the almost everywhere equivalence relation. This step means that the values of $f$ at the change points $c_i$ do not matter, for a finite set of points is measure zero. Let $V$ be the subspace of $L^p([0,H])$ consisting of (the images of) piecewise constant functions. Properly interpreted, our procedure is the function
\begin{align}
\mathcal{F}: \mathfrak{X} \to V.   
\end{align}
This defines an equivalence relation $\sim$ on $ \mathfrak{X} $ and an embedding 
\begin{align}
\mathfrak{X}/\mathord{\sim} \xhookrightarrow{} V \subset L^p([0,H]).
\end{align}
We note that $V$ is a vector subspace of $L^p([0,H])$ and thus inherits its $L^p$ norm $\|.\|_p$, defined by
\begin{align}
\|f\|_p= \left(\frac{1}{H} \int_0^H |f(x)|^p dx\right)^{\frac{1}{p}},
\end{align}
normalized so that $\|\mathbbm{1}_{[0,H]}\|=1$. When $p=2$, there is an inner product associated to the norm:
\begin{align}
\langle f,g\rangle =\frac{1}{H} \int_0^H f(x)g(x) dx.
\end{align}
At this point we must make some careful definitions and observations. While $\mathfrak{X}$ has a vector space structure, the change point algorithms do not respect addition of time series, and hence $\mathcal{F}$ is not linear. The absence of linearity forces us to differentiate two separate definitions. First, the metric space structure on $V$ induces a metric on $\mathfrak{X}/\mathord{\sim}$ and a pseudo-metric on $\mathfrak{X}$ itself. Concretely, if $(X_t),(Y_t)$ map to piecewise constant functions $f,g$ respectively, define $d_p(X_t,Y_t)=\|f-g\|_p$. 

Secondly and separately, we can pullback the norm structure to $\mathfrak{X}$ as a measure of the overall magnitude of the time series, defining $\magn(X_t)=\|\mathcal{F}(X_t)\|_p=\|f\|_p$. We use the symbol mag, rather than $\|.\|$, as this is neither a norm nor semi-norm on $\mathfrak{X}$. Concretely, $\magn(X_t)=d_p(X_t,0)$ but $d_p(X_t,Y_t)\neq \magn(X_t - Y_t).$ That is, mag does not induce the metric $d_p.$ Only after passage to $V$, via the non-linear $\mathcal{F}$, does the norm induce the metric.

Now we prove that $d_p$ has all properties of a metric on $\mathfrak{X}/\mathord{\sim}$.

\begin{enumerate}
    \item  It is immediate that $d_p(X_t,Y_t)\geq 0$ always.

\item It is immediate that $d_p(X_t,Y_t)=d_p(Y_t,X_t)$.

\item The triangle inequality $d_p(X_t,Z_t) \leq d_p(X_t,Y_t) + d_p(Y_t,Z_t)$ follows from Lemma \ref{triangle inequality lemma}.

\item If $d_p(X_t,Y_t)=0$, then $(X_t),(Y_t)$ are equal elements of $\mathfrak{X}/\mathord{\sim}$ by Lemma \ref{equivalence relation lemma}.

\end{enumerate}

\begin{lemma}
\label{triangle inequality lemma}
If $(X_t),(Y_t),(Z_t) \mapsto f,g,h \in V$  then $\|f-h\|_p \leq \|f-g\|_p + \|g-h\|_p$, establishing the triangle inequality for $d_p$.
\end{lemma}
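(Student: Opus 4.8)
The plan is to recognize this as essentially the Minkowski inequality for $L^p([0,H])$. First I would note that $V$ is, by construction, a vector subspace of $L^p([0,H])$ carrying the restriction of the $L^p$ norm, so it suffices to prove the triangle inequality for $||\cdot||_p$ on all of $L^p([0,H])$; the fact that $f,g,h$ happen to be classes of piecewise constant functions plays no role. Since $f-h=(f-g)+(g-h)$ as elements of $L^p([0,H])$, the desired estimate is exactly $||u+v||_p \leq ||u||_p + ||v||_p$ with $u=f-g$ and $v=g-h$.

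Next I would invoke Minkowski's inequality, valid for the range of $p\geq 1$ we consider (and for $p=2$ also deducible from the Cauchy--Schwarz inequality associated to the inner product above). The only cosmetic point to address is the normalization factor: writing $||f||_p=\big(\frac{1}{H}\int_0^H |f|^p\,dx\big)^{1/p}$ is just the ordinary $L^p$ norm with respect to the rescaled measure $\frac{1}{H}\,dx$, which is still a finite positive measure on $[0,H]$, so Minkowski applies verbatim and yields $||f-h||_p \leq ||f-g||_p + ||g-h||_p$.

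Finally I would translate back to $d_p$: since $(X_t),(Y_t),(Z_t)$ map under $\mathcal{F}$ to $f,g,h$ respectively, $d_p(X_t,Z_t)=||f-h||_p \leq ||f-g||_p + ||g-h||_p = d_p(X_t,Y_t)+d_p(Y_t,Z_t)$, which is property (iii). I expect no genuine obstacle; the one thing to be careful about is not to be distracted by the non-linearity of $\mathcal{F}$ or the fact that $f,g,h$ come from different change-point partitions — neither is relevant, because the inequality is proved downstream of $\mathcal{F}$, entirely inside the honest normed space $L^p([0,H])$, where $f-g$ and $g-h$ are ordinary $L^p$ functions (each still piecewise constant on the common refinement of the two partitions, though this remark is not needed).
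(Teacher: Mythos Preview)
Your proposal is correct and follows the same approach as the paper: both reduce the triangle inequality for $d_p$ to Minkowski's inequality in $L^p([0,H])$, noting that the piecewise-constant structure of $f,g,h$ is irrelevant once one works in the ambient normed space. The only difference is that the paper supplies a self-contained proof of Minkowski (via Young's inequality and then H\"older's inequality) rather than citing it, but the logical route is identical.
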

\begin{proof}
Passing to $V$ it suffices to prove for any elements of $V$, $\|f+g\|_p \leq \|f\|_p + \|g\|_p.$ This is known as Minkowski's inequality, \cite{Minkowski}. First, the concavity of $\log$ implies Young's inequality: if $a,b\geq 0, p,q>1, \frac{1}{p}+\frac{1}{q}=1$ then
\begin{align}
ab \leq \frac{a^p}{p} + \frac{b^q}{q}.
\end{align}
Next, one proves H\"older's inequality:
\begin{align}
\|fg\|_1 \leq \|f\|_p \|g\|_q.
\end{align}
To prove this, without loss of generality, we can normalize $f$ and $g$ so that both $\|f\|_p=\|g\|_q =1.$ It remains to prove $\|fg\|_1 \leq 1.$ Young's inequality gives, for $x\in[0,H]$,
\begin{align}
|f(x)g(x)| \leq \frac{|f(x)|^p}{p}+\frac{|g(x)|^q}{q}.
\end{align}
We integrate this to get $|fg\|_1\leq 1$ as required. Finally, 
\begin{align}
\|f+g\|^p_p=&\frac{1}{H}\int^H_0 |f+g|^p dx\\
&\leq \frac{1}{H}\int^H_0 (|f|+|g|)|f+g|^{p-1} dx\\
&=\|f|f+g|^{p-1}\|_1 + \|g|f+g|^{p-1}\|_1\\
&\leq \|f\|_p \|(f+g)^{p-1}\|_q + \|g\|_p \|(f+g)^{p-1}\|_q\\
&= (\|f\|_p + \|g\|_p)\|f+g\|^{p-1}_p.
\end{align}
The first inequality is by the triangle inequality, the second is by H\"older. The final equality holds as $q(p-1)=p.$ We divide by $\|f+g\|^{p-1}_p$, assuming it is non-zero, to get the result. If $\|f+g\|^{p-1}_p$ is zero, the result is trivial.
\end{proof}
\begin{lemma}
\label{equivalence relation lemma}
Given time series $(X_t),(Y_t)$, the following are equivalent:

\begin{enumerate}
    \item  $d_p(X_t,Y_t)=0$;

 \item $X_t \sim Y_t$;

\item $X_t,Y_t$ have the same set of change points $c_0,...,c_m$ and the same values of means $\mu_i$ between them.

\end{enumerate}

\end{lemma}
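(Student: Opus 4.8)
The plan is to prove the three conditions equivalent by establishing (i)$\Leftrightarrow$(ii) formally from the definitions and then concentrating on (ii)$\Leftrightarrow$(iii), where all the content lies. For (i)$\Leftrightarrow$(ii) I would simply unwind notation: if $(X_t),(Y_t)$ map to piecewise constant functions $f,g$, then $d(X_t,Y_t)=\|f-g\|_p$ by definition, while $X_t\sim Y_t$ means exactly $\mathcal{F}(X_t)=\mathcal{F}(Y_t)$, i.e.\ $[f]=[g]$ in $L^p([0,H])$. Thus (i)$\Leftrightarrow$(ii) is precisely the positive-definiteness of $\|\cdot\|_p$ on the quotient: $\|h\|_p=0$ forces $\int_0^H|h|^p\,dx=0$, hence $h=0$ almost everywhere, hence $[h]=0$ (see \cite{Billingsley}); apply this to $h=f-g$.

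For (iii)$\Rightarrow$(ii): if the two series have the same change points $c_0<\dots<c_{m+1}$ and the same segment means $\mu_i$, then $f$ and $g$ agree on each open interval $(c_i,c_{i+1})$ and so can differ only on the finite set $\{c_0,\dots,c_{m+1}\}$, which is Lebesgue-null. Hence $f=g$ a.e., so $[f]=[g]$ and $X_t\sim Y_t$. This direction is immediate.

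The substantive direction is (ii)$\Rightarrow$(iii). Assume $f=g$ a.e.\ on $[0,H]$; write the change points of $(X_t)$ as $c_0<\dots<c_{m+1}$ with means $\mu_i$ on $(c_i,c_{i+1})$, and those of $(Y_t)$ as $d_0<\dots<d_{n+1}$ with means $\nu_j$ on $(d_j,d_{j+1})$. The idea is to pass to the common refinement: let $Q$ be the sorted union $\{c_i\}\cup\{d_j\}$. Each open subinterval $J$ cut out by $Q$ has positive length and is contained in some $(c_i,c_{i+1})$ and in some $(d_j,d_{j+1})$, so $f\equiv\mu_i$ and $g\equiv\nu_j$ on $J$; since $f=g$ a.e.\ and $|J|>0$, these constants coincide. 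So $f$ and $g$ are literally the same piecewise constant function off a null set, and in particular they have the same essential jump points and the same value on each segment. It then remains only to match the algorithmically recorded change-point sets $\{c_i\}$ and $\{d_j\}$ with these intrinsic jump sets, after which $\{c_i\}=\{d_j\}$ and $\mu_i=\nu_i$ follow at once.

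The main obstacle is exactly this last identification. A priori the CPM procedure could return a split $c_i$ across which the empirical mean does not change ($\mu_{i-1}=\mu_i$); such a point leaves no trace in $f$ and cannot be read off from $[f]$, so without further care (ii) would not force the recorded change-point sets to agree. I would deal with this either by working throughout with the reduced representation, deleting every $c_i$ with $\mu_{i-1}=\mu_i$ (this changes neither $[f]$ nor $d_p$ nor $\sim$), or by observing that a CPM-detected change in mean is always witnessed by an actual shift in the adjacent segment averages, so redundant change points never arise. Once redundant change points are excluded, $\{c_i\}\setminus\{0,H\}$ is exactly the jump set of $f$ and $\{d_j\}\setminus\{0,H\}$ that of $g$; since $[f]=[g]$ the jump sets agree, giving $\{c_i\}=\{d_j\}$, and reading off the constant value of $f=g$ on each segment gives $\mu_i=\nu_i$. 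Everything else is routine measure theory.
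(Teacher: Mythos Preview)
Your proof is correct and follows essentially the same route as the paper: (i)$\Leftrightarrow$(ii) from positive-definiteness of the $L^p$ norm on the quotient, (iii)$\Rightarrow$(ii) trivially, and (ii)$\Rightarrow$(iii) by observing that $f-g$ vanishes on every open subinterval of the common refinement of the two break sets and then identifying the partitions. You are in fact slightly more careful than the paper on one point: the paper simply asserts that once $f=g$ off the union of the two break sets, ``all the open intervals $(c_i,c_{i+1})$ and $(d_j,d_{j+1})$ must coincide,'' whereas you correctly flag the issue of a recorded change point with $\mu_{i-1}=\mu_i$ and explain why it may be discarded; this fills a small gap the paper leaves implicit.
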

In particular, equivalence does not depend on the parameter $p$. Figure \ref{fig:PeturbationPlot} depicts some examples of equivalent time series.

\begin{proof}
Let $(X_t)$ and $(Y_t)$ map to $f$ and $g$ respectively. Condition 3. states that $f=g$. By definition of $\sim$, condition 2. states that $[f]=[g] \in V$, while condition 1. states that $\|f-g\|_p=0$. Clearly then, 3. implies 2., and 1. and 2. are equivalent since $\|.\|_p$ is a norm on $L^p([0,H]).$ Finally, we suppose 1. holds. Then $f-g$ is itself a piecewise linear function with $L^p$ norm zero. If $f$ and $g$ have breaks $c_1,...,c_m$ and $d_1,...,d_n$, respectively, then $f-g$ is continuous on $[0,H] -\{c_1,...,c_m,d_1,...,d_n\}$. Since the integral of $|f-g|^p$ is zero over this region, continuity means $f-g$ is identically zero on $[0,H] -\{c_1,...,c_m,d_1,...,d_n\}$. As such, all the open intervals $(c_i,c_{i+1})$ and $(d_i,d_{i+1})$ must coincide, with the exact same $\mu_i$ coefficients. This proves condition 3. 
\end{proof}
\begin{lemma}
\label{perturbation property lemma}
Let $X_t,Y_t$ be time series, and suppose $X_t$ is compared with a perturbation $X'_t.$ Suppose $X'_t$ differs from $X_t$ on an interval $[t_0,t_0 + \delta]$ such that the change point algorithm $\mathcal{F}$ detects two additional change points at $t_0, t_0 + \delta$ and a mean that differs by $\epsilon$ from the mean of $X_t$. Then
\begin{align}
  %\begin{gathered}
        \max \{ |d_p(X'_t,Y_t) - d_p(X_t,Y_t)|, |\magn(X_t)-\magn(X'_t)| \} 
        \leq \epsilon\left(\frac{ \delta}{H}\right)^{\frac{1}{p}}.
  %\end{gathered}
\end{align}
That is, $d_p$ and $\magn$ are continuous with respect to small deformations in the time series. Figure \ref{fig:PeturbationPlot} illustrates the suppositions of the lemma.
\end{lemma}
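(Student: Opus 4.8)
The plan is to reduce the statement to one direct integral computation in $V$ followed by two applications of the reverse triangle inequality. First I would name the relevant piecewise constant functions: set $f=\mathcal{F}(X_t)$, $f'=\mathcal{F}(X'_t)$ and $g=\mathcal{F}(Y_t)$, all regarded as elements of $V\subset L^p([0,H])$, and then unpack the hypothesis. Because $X'_t$ agrees with $X_t$ outside $[t_0,t_0+\delta]$, the algorithm run on $X'_t$ returns the change points of $X_t$ together with the two new ones at $t_0$ and $t_0+\delta$, and reports the same empirical means on every interval other than $(t_0,t_0+\delta)$; on $(t_0,t_0+\delta)$ it reports a mean differing from that of $X_t$ by $\epsilon$. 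Consequently $f$ and $f'$ coincide on $[0,H]\setminus(t_0,t_0+\delta)$, and on $(t_0,t_0+\delta)$ their difference is a constant of absolute value $\epsilon$. So $f-f'$ is (the class of) a function supported on an interval of length $\delta$ on which it has absolute value $\epsilon$.

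Next I would compute directly from the normalized $L^p$ norm:
\[
||f-f'||_p = \Bigg(\frac{1}{H}\int_{t_0}^{t_0+\delta}\epsilon^p\,dx\Bigg)^{1/p} = \epsilon\Big(\frac{\delta}{H}\Big)^{1/p}.
\]
(If one reads the hypothesis as ``the mean differs by at most $\epsilon$'', replace $=$ by $\leq$; the conclusion is unaffected.)

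Finally I would invoke the reverse triangle inequality, which is legitimate because $||\cdot||_p$ is a genuine norm on $V$ by Lemma \ref{triangle inequality lemma}, and because the induced $d_p$ therefore satisfies the triangle inequality. For the distance term, $d_p(X'_t,Y_t)=||f'-g||_p$ and $d_p(X_t,Y_t)=||f-g||_p$, whence $|d_p(X'_t,Y_t)-d_p(X_t,Y_t)|\leq ||f'-f||_p$. For the magnitude term, $\magn(X_t)=||f||_p$ and $\magn(X'_t)=||f'||_p$, whence $|\magn(X_t)-\magn(X'_t)|\leq ||f-f'||_p$. Both quantities are bounded by $\epsilon(\delta/H)^{1/p}$, hence so is their maximum, which is exactly the claimed inequality; letting $\delta\to 0$ or $\epsilon\to 0$ then yields the stated continuity.

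As for difficulty: there is no analytic obstacle here. The only step requiring genuine care is the first one, namely making precise the sense in which the hypothesis ``$\mathcal{F}$ detects two additional change points and an $\epsilon$-perturbed mean'' forces $f$ and $f'$ to agree off a single interval of length $\delta$ with controlled difference there; once that bookkeeping is granted, the remainder is the reverse triangle inequality together with an elementary integral. I would also note that the identical argument covers finitely many disjoint perturbation intervals, with $\delta$ replaced by their total length.
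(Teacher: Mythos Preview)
Your proposal is correct and follows essentially the same route as the paper: pass to the piecewise constant representatives $f,f',g$, bound $\|f-f'\|_p$ by the direct integral $\epsilon(\delta/H)^{1/p}$, and then apply the reverse triangle inequality to control both $|d_p(X'_t,Y_t)-d_p(X_t,Y_t)|$ and $|\magn(X_t)-\magn(X'_t)|$. Your write-up is in fact more explicit than the paper's in justifying why $f$ and $f'$ agree off $(t_0,t_0+\delta)$, but the argument is the same.
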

\begin{proof}
If $X_t,X'_t$ and $Y_t$ map to $f,f'$ and $g$, respectively then
\begin{align}
\max(|d_p(X'_t,Y_t) - d_p(X_t,Y_t)|,|\magn(X_t) - \magn(X'_t)|)\\
=\max(\|f'-g\|_p - \|f-g\|_p, \|f\|_p - \|f'\|_p) \\
\leq \|f-f'\|_p 
\leq \left(\frac{1}{H} \int^{t_0+\delta}_{t_0} \epsilon^p   \right)^{\frac{1}{p}}
\leq \epsilon\left(\frac{ \delta}{H}\right)^{\frac{1}{p}}.
\end{align}
We remark that if $p < \infty$ then this establishes continuity relative to the length of the deformation interval, in addition to the deformation magnitude.
\end{proof}
Combining all these results we have proved the following:

\begin{figure}
    \centering
    \includegraphics[width=\textwidth]{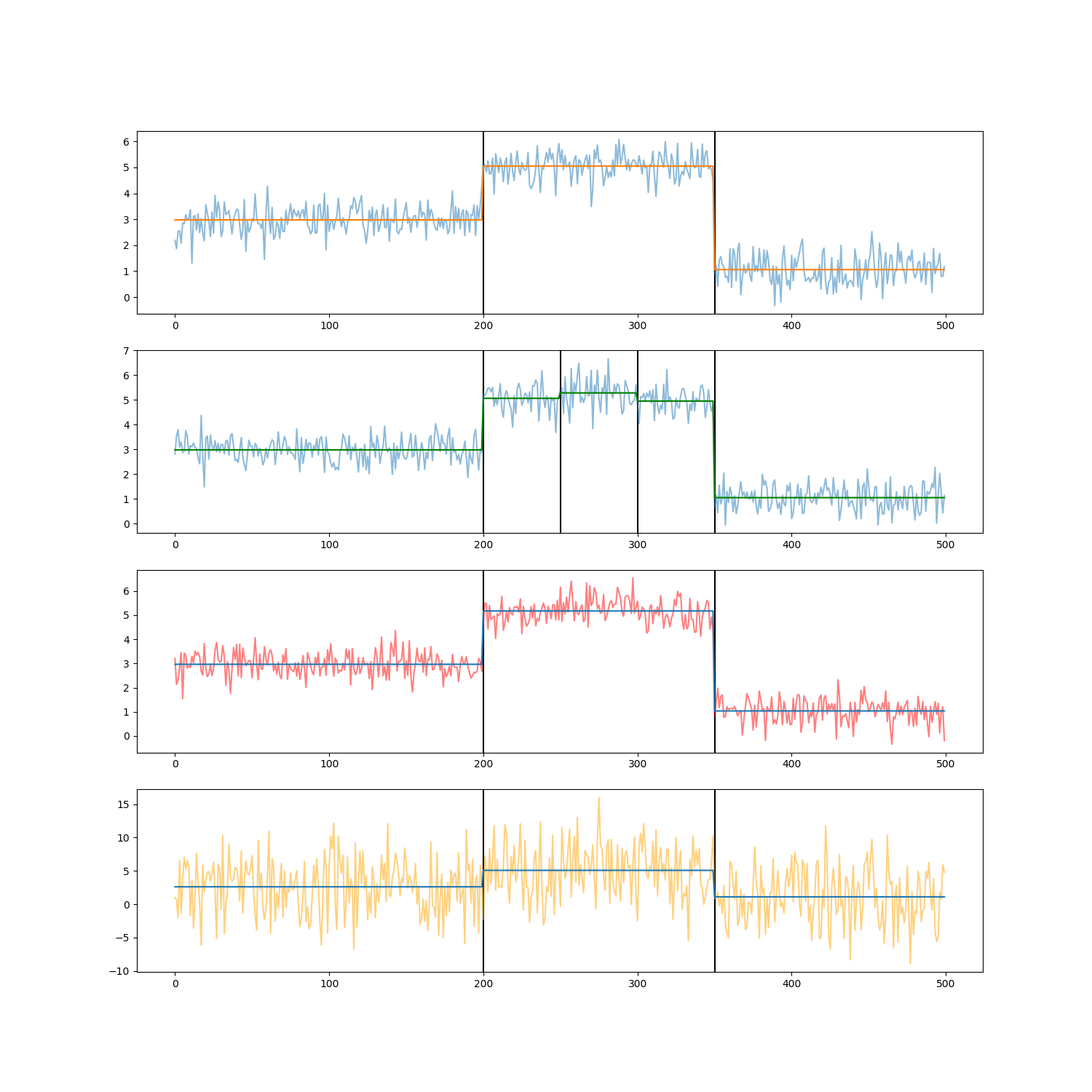}
    \caption{Illustration of equivalence and perturbation. Plot 1 depicts a time series $X_t$, Plot 2 depicts a time series $X'_t$ that differs from $X_t$ by a small deformation, as in Lemma \ref{perturbation property lemma} and proposition \ref{other perturbation prop}. Plots 1 and 3 are equivalent under both $\mathcal{F}_{\mu}$ and $\mathcal{F}_{var}$, as defined in Theorems \ref{thm:mean_theorem} and \ref{thm:var_theorem}. Plots 1 and 4 are equivalent under $\mathcal{F}_{\mu}$ but not $\mathcal{F}_{var}$. These plots illustrate the aim of our framework, to identify two time series as ``the same'' if they only differ up to noise. The precise definition may be complex, with Plots 1 and 3 more strongly equivalent - the same distributions in their locally stationary segments - than Plots 1 and 4 - just the same means in their locally stationary segments.
}
    \label{fig:PeturbationPlot}
\end{figure}

\begin{theorem}[Mean Theorem] 
\label{thm:mean_theorem}
The space $\mathfrak{X}$ is equipped with a class of functions $\mathcal{F}_{\mu}$ that induce metrizable quotient spaces $(\mathfrak{X}/\mathord{\sim},d_p)$. The equivalence can be characterized simply as $X_t \sim Y_t$ if $X_t$ and $Y_t$ have identical sets of change points and means in their stationary periods, and does not depend on $p$. The metrics $d_p$ are stable under perturbations in the time series. We may also equip $\mathfrak{X}$ with a magnitude function, representing a holistic measure of its changing mean up to noise.
\end{theorem}

By an identical method, using change point algorithms for changes in variance, one can construct deterministic functions $\mathcal{F}_{var}$ that prove the following:

\begin{theorem}[Variance Theorem]
\label{thm:var_theorem}
$\mathfrak{X}$ is equipped with a class of functions $\mathcal{F}_{var}$ that induce metrizable quotient spaces $(\mathfrak{X}/\mathord{\sim},d_p)$. The equivalence can be characterized as $X_t \sim Y_t$ if $X_t$ and $Y_t$ have identical sets of change points and variances in their stationary periods, and does not depend on $p$. The metrics $d_p$ are stable under perturbations in the time series. We may equip $\mathfrak{X}$ with a magnitude function, representing a holistic measure of its changing variance up to noise.

\end{theorem}

Other statistical properties induce other equivalence relations and maps $\mathcal{F}$, all producing variants of Theorem \ref{thm:main_theorem}. There exist change point algorithms that detect changes in all stochastic properties simultaneously; these induce finer equivalence relations.

\section{Simulation study and comparison of distances}
\label{sec:simulation}
In this section, we investigate some desirable properties of our metrics $d_p$ relative to existing options, by both theoretical propositions and simulation. First, we briefly compare our new distance to the simple Euclidean distance between time series.

\begin{prop}
\label{prop:noiseinvariance}
Let $\mathcal{F}$ be a mapping induced by a chosen change point algorithm procedure as in Section \ref{sec:proofs}. Let $X_t \sim X'_t$ and $Y_t \sim Y'_t$ be two pairs of equivalent time series, that is, $d_p(X_t,X'_t)=d_p(Y_t,Y'_t)=0$. Then $d_p(X_t,Y_t)=d_p(X'_t,Y'_t)$. However, regarding the Euclidean metric, the values of $\|X_t - Y_t\|_2$ and $\|X'_t - Y'_t\|_2$ may differ by a linearly increasing function of  $\| X_t - X'_t\|_2$ and $\| Y_t - Y'_t\|_2$.
\end{prop}
Heuristically, $X_t,X'_t$ and $Y_t, Y'_t$ differ only up to noise. Our distance $d_p$ filters out the noise and returns the same distance between $X_t$ and $Y_t$ as $X'_t$ and $Y'_t$. However, if the noise between $X_t$ and $X'_t$ is ``opposite'' that of the noise between $Y_t$ and $Y'_t$, the Euclidean metric may be sensitive to this and add these errors rather than remove them.
\begin{proof}
By definition, $X_t$ and $X'_t$ are mapped under $\mathcal{F}$ to a common element $[f] \in V$, while $Y_t$ and $Y'_t$ are mapped to a common element $[g]$. By definition, $d_p(X_t,Y_t)=\|f - g\|_p = d_p(X'_t,Y'_t)$, concluding the first statement. Now we turn to the Euclidean metric. By the triangle inequality, 
\begin{align}
\|X'_t - Y'_t\|_2 = \|X_t - Y_t + X'_t - X_t + Y_t - Y'_t\|_2 \\
\leq \|X_t - Y_t\|_2 + \|X'_t - X_t\|_2 + \|Y_t - Y'_t\|_2,\\
\text{so } \|X'_t - Y'_t\|_2 - \|X_t - Y_t\|_2 \leq \|X'_t - X_t\|_2 + \|Y_t - Y'_t\|_2.
\end{align}
To show this bound can be attained, we suppose $X'_t = X_t + N_t$ while $Y'_t = Y_t - N_t$ where $N_t$ is ``noise'' such that $X'_t$ and $X_t$ are deemed identical under $\mathcal{F}$ and analogously for $Y'_t$ and $Y_t$. Then $X'_t - Y'_t = X_t - Y_t + 2 N_t$. For simplicity, assume under the Euclidean inner product that $\langle N_t, X_t - Y_t \rangle = 0$, which would be the case if $X_t-Y_t$ were locally constant and $N_t$ had mean zero over the locally constant intervals of $X_t - Y_t$. Then by the Pythagorean theorem, 
\begin{align}
\|X'_t - Y'_t\|^2_2 = \|X_t - Y_t\|^2_2 + 4 \|N_t\|^2, \\
\text{so } \|X'_t - Y'_t\|_2 - \|X_t - Y_t\|_2 = \frac{4 \|N_t\|^2}{ \|X'_t - Y'_t\|_2 + \|X_t - Y_t\|_2} \\
\geq \frac{4 \|N_t\|^2}{ 2 \|X_t - Y_t\|_2 +  \|X'_t - X_t\|_2 + \|Y_t - Y'_t\|_2} \\
=  \frac{4 \|N_t\|^2}{2 \|X_t - Y_t\|_2 + 2 \|N_t\|},
\end{align}
which may increase linearly with $\|N_t\|$ when it is a linear proportion of $\|X_t - Y_t\|$, completing the proof.
\end{proof}

Next, we examine the performance of our metrics $d_p$ relative to previously proposed such distances between time series based on structural breaks. Our previous work \cite{James2020_nsm} uses metrics and semi-metrics between the sets of change points to measure distance between time series; that work does so without using data of the mean or variance between these breaks. The following simulation study and proposition provide justification for why the distance $d_p$ may more appropriately measure distance between time series.

For this simulation study, we will make use of the following (semi)-metrics between finite sets $S$ and $T$. First, let 
\begin{align}
d(x,S) = \inf_{s \in S} d(x,s). \label{min distance defn}
\end{align}
Then, let the Hausdorff metric be defined as
\begin{eqnarray}
\label{eq:Hausdorff}
    d_{H}(S,T) & = & \text{max } \bigg( \sup_{s \in S} d(s,T), \sup_{t \in T} d(t,S) \bigg), \\
    & = & \sup \{ d(s,T), s \in S; d(t,S), t \in T \}. 
\end{eqnarray}
The modified Hausdorff semi-metric is defined as
\begin{align}
\label{eq:modifiedHausdorff}
d^1_{MH}(S,T) = \max \bigg( \frac{1}{|S|} \sum_{s \in S} d(s,T), \frac{1}{|T|} \sum_{t \in T} d(t,S) \bigg).
\end{align}
The new family of semi-metrics introduced in \cite{James2020_nsm} is defined by 
\begin{align}
\label{eq:MJ}
    d^p_{MJ}({S},{T}) = \Bigg(\frac{\sum_{t\in T} d(t,S)^p}{2|T|} + \frac{\sum_{{s} \in {S}} d(s,T)^p}{2|S|} \Bigg)^{\frac{1}{p}}.
\end{align}
These aforementioned (semi-)metrics between finite sets can readily be applied to measure distance between time series with respect to their change points. If $(X_t)$ and $(Y_t)$ are time series with change point sets $S,T$, respectively (under some change point algorithm), one can simply set $d(X_t,Y_t)$ to be any of $d_H(S,T), d^1_{MH}(S,T)$ or $d^1_{MJ}(S,T).$

\begin{figure}
    \centering
    \includegraphics[width=\textwidth]{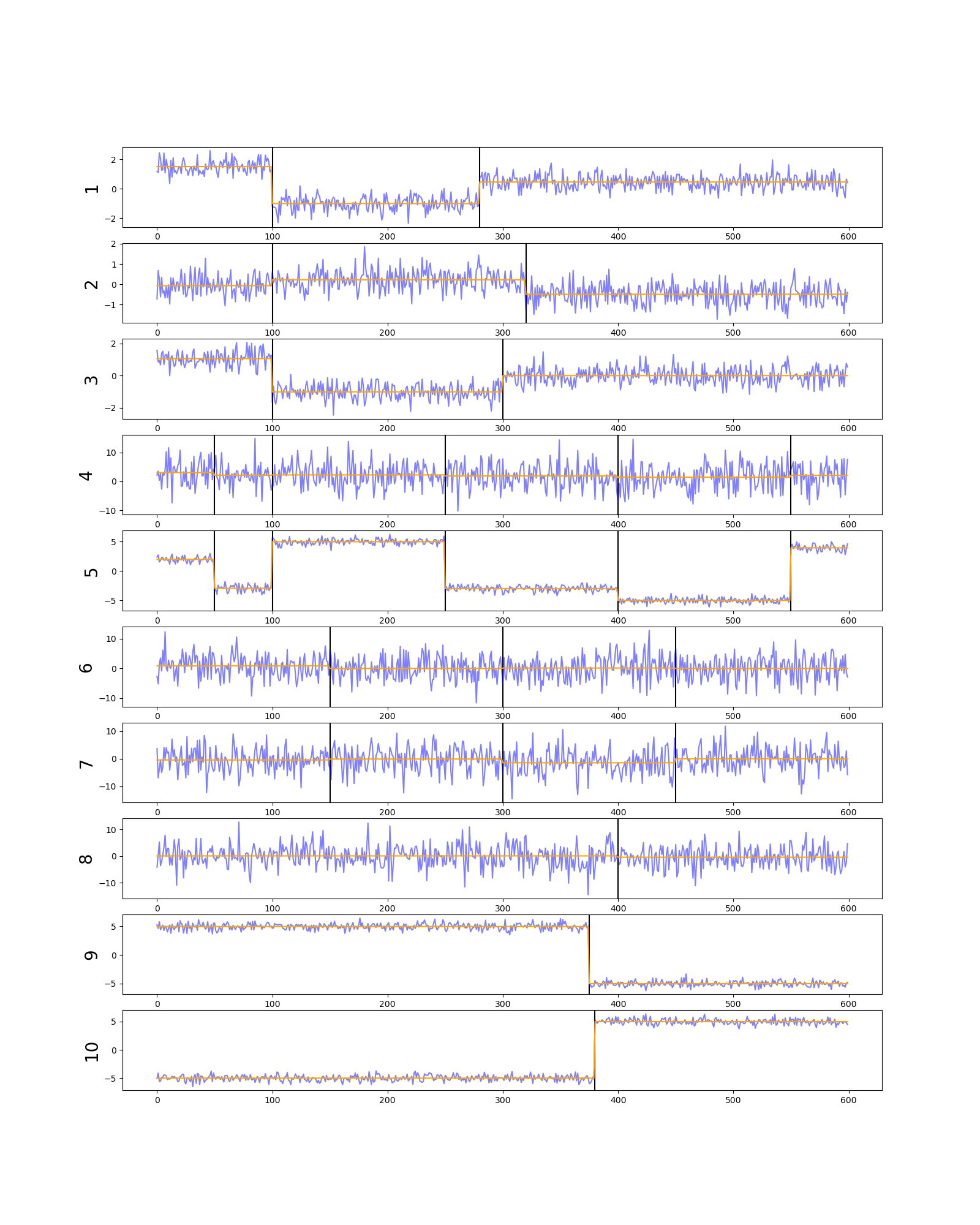}   \caption{Ten synthetic time series with synthetic change points and piecewise constant mean functions. These time series are locally stationary with respect to their mean, and give synthetically generated examples similar to the sorts of real-data time series to which one would apply our framework.
    }
    \label{fig:SyntheticData}
\end{figure}

\begin{figure}
    \centering
    \begin{subfigure}[b]{0.49\textwidth}
        \includegraphics[width=\textwidth]{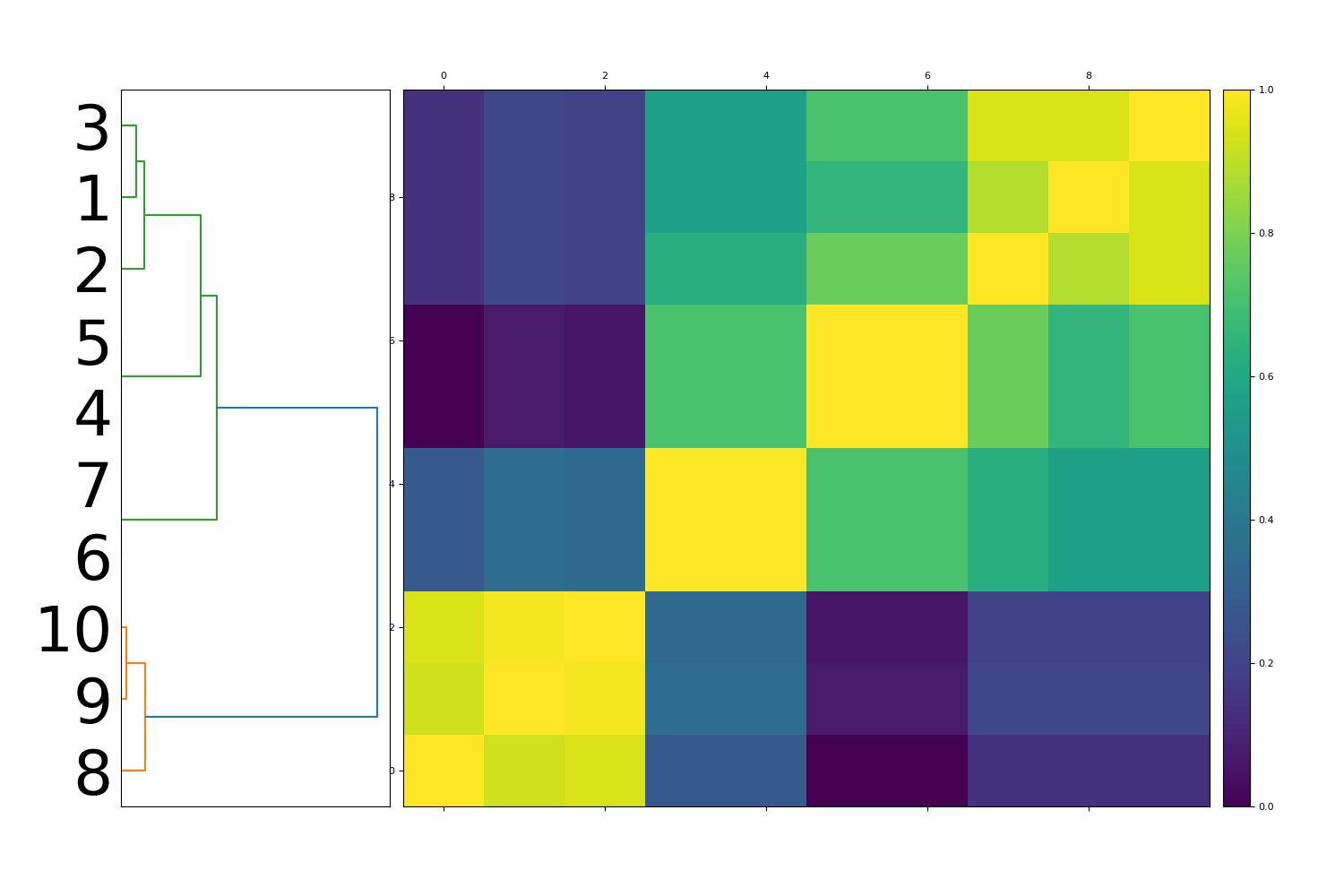}
        \caption{}
        \label{fig:Hausdorff_dendrogram}
    \end{subfigure}
    \begin{subfigure}[b]{0.49\textwidth}
        \includegraphics[width=\textwidth]{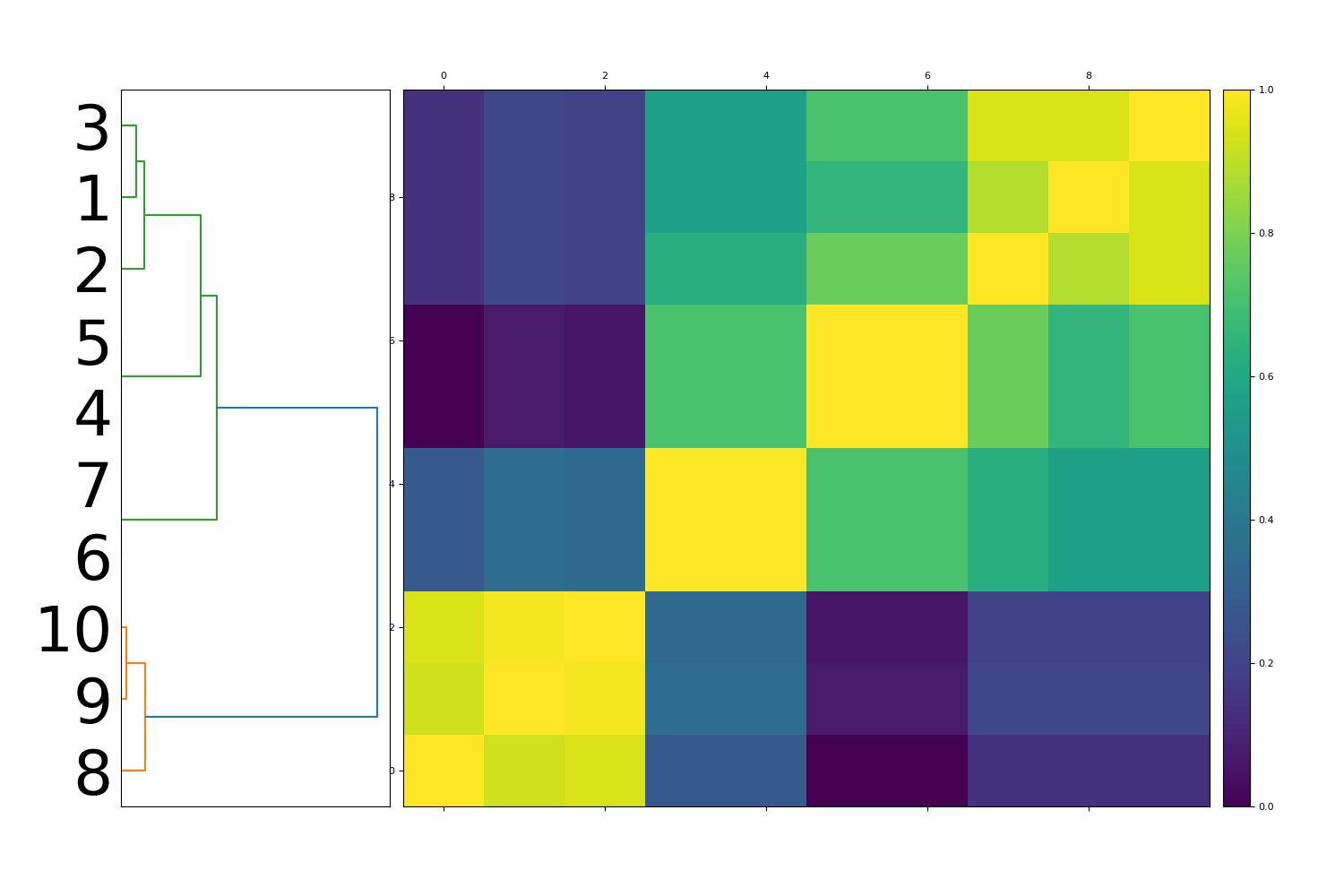}
        \caption{}
        \label{fig:Modified_Hausdorff_dendrogram}
    \end{subfigure}
    \begin{subfigure}[b]{0.49\textwidth}
        \includegraphics[width=\textwidth]{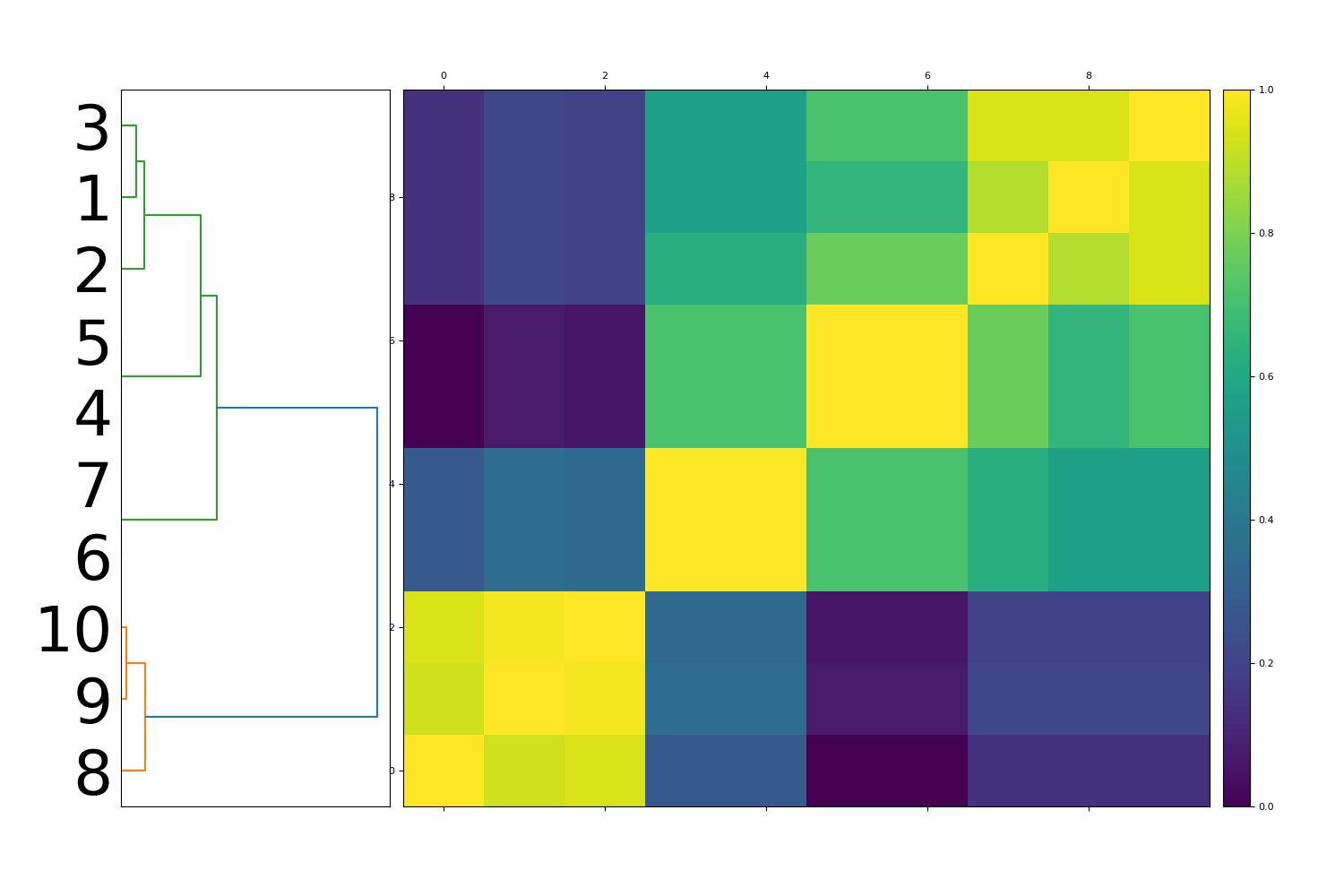}
        \caption{}
        \label{fig:MJ1_dendrogram}
    \end{subfigure}
    \begin{subfigure}[b]{0.49\textwidth}
        \includegraphics[width=\textwidth]{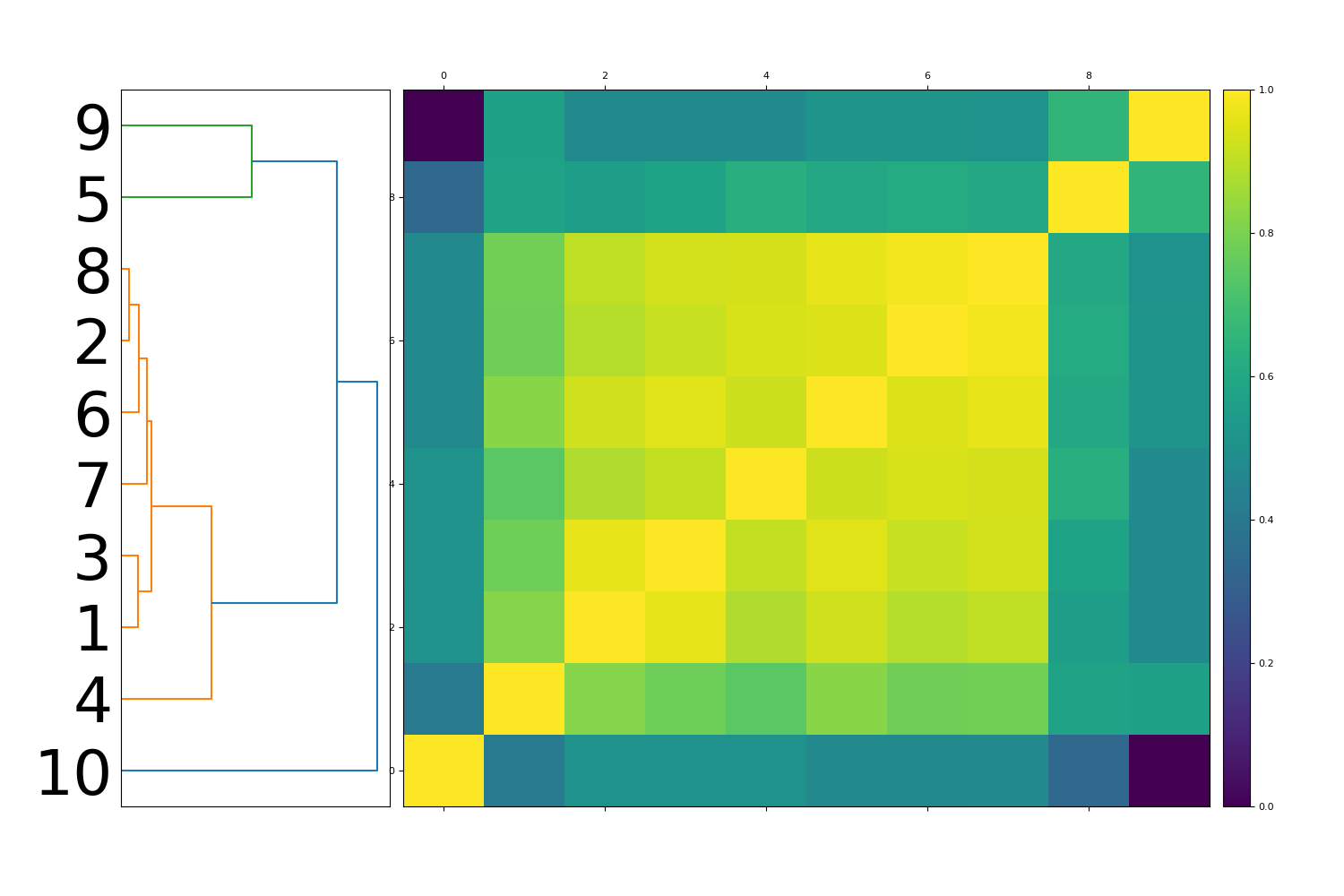}
        \caption{}
        \label{fig:Lp_dend}
    \end{subfigure}    
\caption{Hierarchical clustering applied to distances between ten synthetic time series using one of four discrepancy measures, (a) the Hausdorff metric between change points (\ref{eq:Hausdorff}), (b) the modified Hausdorff semi-metric between change points (\ref{eq:modifiedHausdorff}), (c) MJ$_p$ semi-metric between change points (\ref{eq:MJ}) and (d) $d_p$ distances between time series, with $p=1$. The final distance (d) is able to cluster the time series more appropriately without the loss of information inherent in (a), (b), (c); those distances identify erroneous similarities between the time series solely according to their structural breaks, whereas the $d_p$ distance recognizes more data.
}
\label{fig:Synthetic Distance inference}
\end{figure}

We generate ten synthetic time series (labeled TS1 - TS10), which we display in Figure \ref{fig:SyntheticData}. These provide example where two different structural breaks (change points) may have markedly different changes in the mean. Black vertical lines indicate detected change points, with the orange line representing the piecewise constant functions $f$. When considering only structural breaks, the most similar time series are groups as follows: \{TS1, TS2, TS3\}, \{TS4, TS5\}, \{TS6, TS7\}, \{TS8, TS9, TS10\}. However, considering structural breaks and their mean, using our metric $d_p$ induced from $\mathcal{F}_{\mu}$, the most similar time series are \{TS1, TS3\} and \{TS7, TS8\}.

In Figure \ref{fig:Synthetic Distance inference}, we implement hierarchical clustering on the ten synthetic time series with our new $d_p$ metric, here with $p=1$, as well as existing distance measures in \cite{James2020_nsm}. We see that only the $d_p$ metric, in Figure \ref{fig:Lp_dend}, appropriately clusters similar time series according to their changing values, while the other three figures identify erroneous similarities between the time series based solely on their structural breaks. That is, the other (semi-)metrics neglect the additional data that $d_p$ recognizes, demonstrating the superior performance of the latter.

Next, we prove a proposition that demonstrates that only our new metrics $d_p$ are continuous with respect to small deformations in the time series that may introduce new structural breaks. That is, the existing distance measures between change points may be excessively sensitive to new structural breaks, either caused by a deformation in the time series data or a faulty detection by a change point algorithm. We adopt the same suppositions of Lemma \ref{perturbation property lemma}, illustrated in Figure \ref{fig:PeturbationPlot}.

\begin{prop}
\label{other perturbation prop}
Let $X_t,Y_t$ be time series, and suppose $X_t$ is compared with a perturbation $X'_t.$ Suppose $X'_t$ differs from $X_t$ on an interval $[t_0,t_0 + \delta]$ such that the change point algorithm $\mathcal{F}$ detects two additional change points at $t_0, t_0 + \delta$ and a mean that differs by $\epsilon$ from the mean of $X_t$. The three (semi)-metrics $d_H, d^1_{MH},d^1_{MJ}$ between sets are not continuous with respect to deformations in the time series. 
\end{prop}

\begin{proof}
Adopting the notation of Lemma \ref{perturbation property lemma}, let $X_t,Y_t$ have structural breaks $S=\{c_1,...,c_m\}$ and $T=\{d_1,...,d_n\}$ respectively. Relative to $X_t$, $X'_t$ has two additional structural breaks $S'=S\cup\{t_0,t_0+\delta\}$. If we measure distance between time series purely relative to structural breaks using the Hausdorff metric $d_H$, then the distance between $X_t$ and $Y_t$ is defined as $d_H(S,T)$ while that between $X'_t$ and $Y_t$ is $d_H(S',T)$. As such, $|d_H(S,T)-d_H(S',T)|$ is constant and non-zero relative to $\epsilon$. Hence, it does not approach zero as $\epsilon \to 0$. The same holds for $d^1_{MH}$ and $d^1_{MJ}$. 

Moreover, these metrics are not continuous relative to $\delta$ either. Consider the simple case where $S=T$. Assume $c_{j-1}<t_0<t_0+\delta<c_j$. Then $d_H(S',T) \geq d(t,S) =\min(t_0 - c_{j-1}, c_j - t_0)$. This is also constant relative to $\delta$ and does not approach zero as $\delta \to 0$. A similar argument holds for $d^1_{MH},d^1_{MJ}$. This proves the result.
\end{proof}

Next, we prove a property that establishes a form of linearity for the metrics $d_p$ analogous to the Euclidean metric or other classical $L^p$ metrics.

\begin{prop}
\label{prop:linearity}
Let $X_t,Y_t$ be time series and $a,b$ be constants. Suppose the mapping $\mathcal{F}$ has one of two properties: either
\begin{align}
\label{eq:propcond1}
\mathcal{F}(aX_t+b)=a\mathcal{F}(X_t)+b \in V, \\
 \text{or } \mathcal{F}(aX_t+b)=|a|\mathcal{F}(X_t) \in V. \label{eq:propcond2}
\end{align}
Then $d_p(aX_t+b,aY_t+b)=|a|d_p(X_t,Y_t)$. That is, $d_p$ respects linear operations between time series.
\end{prop}
The proof of this proposition is quite trivial, but understanding the primary condition is more interesting. The property that $\mathcal{F}(aX_t+b) = a\mathcal{F}(X_t)+b \in V$ really has two components: first, that the change points of $aX_t+b$ are the same as $X_t$, assuming $a \neq 0$, and secondly, that the statistical quantity $\mathcal{F}$ is measuring scales commutes with scalar multiplication and the addition of constants. The property (\ref{eq:propcond2}) has a near-identical interpretation. The former component is more subtle than it looks, relying on the fact that the change point algorithm use normalized test statistics. Fortunately, it is the case that essentially all change point algorithms do use normalized test statistics (and should!) so that means this condition holds for all procedures we will use. As for the latter component, the property (\ref{eq:propcond1}) holds for the mean and hence $\mathcal{F}_\mu$ of Theorem \ref{thm:mean_theorem}. This is simply due to the fact that the mean of a random variable $Z$ satisfies $E(aZ+b)=aE(Z)+b$. On the other hand, the property (\ref{eq:propcond2}) holds for the standard deviation and hence $\mathcal{F}_\sigma$ closely related to $\mathcal{F}_{var}$ of Theorem \ref{thm:var_theorem}. This is also simply due to the fact that $\sigma(aZ+b)=|a|\sigma(Z)$.
\begin{proof}
In this proof, we assume the condition (\ref{eq:propcond1}) but the exact same proof holds for (\ref{eq:propcond2}). Suppose $X_t$ is mapped to $[f] \in V$ while $Y_t$ is mapped to $[g]$. Thus $d_p(X_t,Y_t)=\| f-g\|_p$ by definition. By the property of $\mathcal{F}$, we have $\mathcal{F}(aX_t+b)=af+b$, so $aX_t+b$ is mapped to $af+g$. Thus, $d_p(aX_t+b,aY_t+b)=\|af+b - (ag+b)\|_p = |a| \|f-g\|_p = |a|d_p(X_t,Y_t)$.
\end{proof}
The explanation above and the simple proof show that this proposition applies to the mean-mapping $\mathcal{F}_\mu$ and associated distances of Theorem \ref{thm:mean_theorem} as well the standard deviation-mapping $\mathcal{F}_{\sigma}$, related to the variance-mapping $\mathcal{F}_{var}$ of Theorem \ref{thm:var_theorem}. That is, $d_p$ respect linear operations between time series (scalar multiplication and the addition of a constant) for both mean and standard deviation.

We conclude this section by considering several invariances our $d_p$ metrics do and do not have, and some modifications one may make for a desired application. First, Proposition \ref{prop:noiseinvariance} immediately establishes noise-invariance, when properly defined, the guiding property of the entire framework. Next, as a result of Proposition \ref{prop:linearity}, $d_p$ are not invariant to amplitude (caling $X_t \mapsto aX_t$) or offset in the $y$-direction (when $X_t$ is left invariant but $Y_t \mapsto Y_t + b$). However, both of these invariances can be attained by simply normalizing the time series prior to analysis (computing Z-scores) as required by the application. This is required in most domains \cite{Rakthanmanon2012,Batista2013}, and in fact we do perform such a normalization when computing normalized distances in Section \ref{sec:analysis_collections}.

Other invariances must be established by modifying our $d_p$ distances, which should be possible by combining with existing approaches. For example, \cite{Batista2013} elegantly multiply the Euclidean metric by an ordered ratio of the complexities of two time series to establish a complexity-invariant distance. That approach should work analogously for our $d_p$ metrics. Phase invariance is a complex problem, for which typically the best approach is testing all possible alignments \cite{Keogh2008_DTW,uni2006}. A similar problem exists for (global) warping, namely (uniform) scaling in the $t$-direction: one must test all possibilities within a given range \cite{Keogh2003}. Such approaches could be incorporated for our $d_p$ distances.

%--------------------------------------SECTION 4

\section{Methodology}
\label{sec:analysis_collections}

In this section, we describe a general procedure to use the function $\mathcal{F}$ and the induced distances $d_p$ to compute and analyze distances in a collection of time series. We will subsequently apply this procedure to analyze the Australian bushfires in the next section. Suppose $(X_t^{(1)}),...,(X_t^{(n)})$ are time series over some time period $[0,H]$. We suppress the time $t$ in our notation.

First, we choose an appropriate change point algorithm for the data, including threshold parameters, and a desired statistical quantity, such as mean or variance. This specifies a precise instance of $\mathcal{F}$. Let the piecewise constant functions associated to the time series $X^{(i)}$ be $f_i$, so that $\mathcal{F}(X^{(i)})=[f_i],i=1,...,n$. Also, we consider the normalized piecewise constant functions  $\hat{f_i}=\frac{f_i}{\|f_i\|_p}$. We form three different matrices:
\begin{align}
\Omega_{ij} = \frac{\langle\mathcal{F}(X^{(i)}), \mathcal{F}(X^{(j)})\rangle}{\|\mathcal{F}(X^{(i)})\|_2 \|\mathcal{F}(X^{(j)})\|_2}= \frac{\langle f_i, f_j\rangle}{\|f_i\|_2 \|f_j\|_2};\\
D^{us}_{ij}=d_p(X^{(i)},X^{(j)})= \|f_i-f_j\|_p; \\
D^{\text{norm}}_{ij}= \| \hat{f_i} - \hat{f_j} \|_p.
\end{align}
\begin{remark}
\label{remark:magnitude}
We recall $\|f_i\|_p=\magn(X^{(i)})$. Thus, the normalized $\hat{f_i}$ are produced by dividing by the aforementioned time series' magnitude functions.
\end{remark}
\begin{remark}
\label{remark:relationship}
We note a relation between $D^{\text{norm}}$ and $\Omega$ for $p=2$. If $u,v$ are two non-zero elements of an inner product space, with normalized elements $\hat{u},\hat{v}$ then
\begin{align}
\|\hat{u}-\hat{v}\|^2_2 = 2 - 2\langle\hat{u},\hat{v}\rangle = 2 - 2\frac{\langle u,v\rangle}{\|u\|_2 \|v\|_2},\\
\text{so } D^{\text{norm}}_{ij}=(2 - 2 \Omega_{ij})^\frac12.
\end{align}

%so $D^{\text{norm},p=2}_{ij}=2 - 2 \Omega_{ij}. $
\end{remark}
\begin{definition}
Let a $D$ and $A$ be $n \times n$ matrices. Then $D$ a termed a \emph{distance matrix} if $D$ is symmetric, $D_{ii}=0$ for all $i,$ and $D_{ij}\geq 0, \forall i,j$. Similarly, $A$ is termed an \emph{affinity matrix} if $A$ is symmetric, $A_{ii}=1,$ and $0 \leq A_{ij} \leq 1, \forall i,j.$
\end{definition}
Given a distance matrix $D$ one may produce an affinity matrix $A$ by: 
\begin{align}
\label{eq:affinitydefn}
A_{ij} = 1 - \frac{D_{ij}}{\max_{k,l} D_{kl}}.
\end{align}

$D^{us}$ is a distance matrix, thus named as it consists of unscaled distances. $D^{\text{norm}}$ is also a distance matrix, consisting of distances between normalized vectors, again with $p\geq 1$ suppressed from the notation. We refer to $\Omega$ as an \emph{alignment matrix} as it measures angles between the $f_i$ as vectors in an inner product space. We note that all entries of $\Omega$ lie in $[-1,1]$ with all diagonal elements equal to $1.$ To the distance matrices $D^{us}$ and $D^{\text{norm}}$ we associate affinity matrices $A^{us}$ and $A^{\text{norm}}$, respectively.

\subsection{Cross-contextual analysis}
\label{cross-contextual}
In many applications (including that of this paper), a collection of $n$ time series has additional structure. Each individual series $X^{(i)}$ contains numerical data one can compare with each other, but there may also exist contextual data among the indices $i$. For instance, in our specific analysis in the next section, $X^{(i)}$ is a time series of measurements taken at a particular measuring station; the relationships between these stations, such as their physical distance, influences the data. It is of interest to analyze not just the collection of time series, but their relationship with physical distance of these measuring stations, or other cross-contextual data. We introduce a general framework for cross-contextual distance analysis. Let $G$ be a distance matrix that codifies relationships between the sources of the data, $i=1,...,n.$ As above, we can associate an affinity matrix $A^G$ to these distances. Then, define \emph{consistency matrices}
\begin{align}
\text{Con}^{us}=A^{us} - A^G;\\
\text{Con}^{\text{norm}}=A^{\text{norm}} - A^G;\\
\text{Con}^\Omega = \Omega - A^G.
\end{align}
These matrices measure the consistency between the affinity measured by our $d_p$ measures on time series, and a cross-contextual distance matrix $G$.

\subsection{Clustering and outlier detection}
\label{clustering methodology section}
With the definitions of Section \ref{cross-contextual} in mind, we may apply the methods of spectral and hierarchical clustering in two ways. First, as is standard, we may apply it directly to the matrices $\Omega$, $D^{us}, D^{\text{norm}}$ to determine anomalies and clusters of similarity with respect to the $d_p$ distance between time series. Because the entries of $A$ are obtained as a linear combination of the entries of $D$, clustering with respect to a distance matrix or its associated affinity matrix gives identical results. The same applies for $\Omega$ and $D^{\text{norm},p=2}$, by remark \ref{remark:relationship}: these will cluster in identical ways.

In addition, we may also apply clustering techniques to the consistency matrices defined above. This will identify those anomalous data sources, indexed by $i$, which do not behave similarly between the two distance measures. In Section \ref{sec:bushfires}, we will uncover striking similarity when considering both distances between air quality time series and geographical distances and will identify just a handful of anomalous locations where this relationship does not hold; these are anomalies in the consistency matrix.

Finally, we also compute the matrix norms of the consistency matrices as a measure of the overall consistency between the distance measures. We use the normalized $L^1$ norm
\begin{align}
\|C\|=\frac{1}{n^2}\sum^n_{i,j=1} |c_{ij}|.
\end{align}

\section{Australian bushfire data}
\label{sec:bushfires}

In this section, we analyze AQI time series from $n=52$ locations across NSW \cite{bushfireAQIdata}. Each time series consists of hourly data from October 20, 2019 - January 20, 2020, to yield time series of length $H=2210$. In the rare case of missing data points, we take the value from the most recent prior hour. We record latitudes and longitudes of the measurement stations, from which we compute the matrix of geographical distances $G$ using the Haversine formula \cite{haversine}.

Mathematically, this data is a suitable opportunity to apply our $d_p$ metrics derived from $L^p$ distances between piecewise constant functions generated from change point algorithms. The change point procedure helps stabilize the data more appropriately, measuring average air quality over intervals of days rather than the huge peaks seen in the raw AQI time series. Since we expect air quality to remain somewhat locally stationary, these change point algorithms are more appropriate than more flexible but less targeted non-parametric models.

Within this multivariate context, we perform two separate analyses. First, our new $d_p$ distance matrices, computed according to the algorithmic framework of Section \ref{sec:proofs}, enables the identification of similarity clusters and anomalies. Second, calculating the geographical distances between measuring sites gives a contextual distance matrix $G$. As described in Section \ref{sec:analysis_collections}, we may form associated consistency matrices. With each of these matrices, we apply hierarchical and spectral clustering, thereby identifying anomalies in the spread of poor air relative to space and time.

Henceforth, we set $p=1$.

\subsection{Raw AQI data}
In Figure \ref{fig:AQITime}, we display the raw AQI time series data for select measuring stations across NSW, and perform our algorithm to fit piecewise constant functions for each via $\mathcal{F}_\mu$, as specified in Section \ref{sec:proofs}. Our results are consistent with geographical proximity and provide several preliminary insights regarding anomalous air quality relative to space and time during our period of analysis. A large number of these measurement sites are located in the city of Sydney, and hence close relative to the rest of the state of NSW. For instance, Randwick and Rozelle are two sites within Sydney that are particularly geographically close geographically, approximately 10 km apart. Figures \ref{fig:Randwick} and \ref{fig:Rozelle} illustrate the striking similarity between these air quality time series. The other four sites will be relevant in subsequent analysis. In Figure \ref{fig:Narrabri}, we note that Narrabri air quality was consistently poor from early November 2019. Wagga Wagga, Katoomba, and Albury, displayed in Figures \ref{fig:WaggaWagga}, \ref{fig:Katoomba}, and \ref{fig:Albury}, all suffered sharp peaks in poor air quality much later than Narrabri. Albury in particular exhibits peaks in poor air quality as late as mid January 2020. Indeed, Albury, on the southern border of NSW, was affected by the fires much later, as they moved south. Curiously, we see that the two Sydney-based stations, Randwick and Rozelle, exhibited rather moderate levels of pollution from mid-December onward, while Katoomba, located nearby, experienced its most severe air quality.

\begin{figure}
    \centering
        \begin{subfigure}[b]{0.49\textwidth}
        \includegraphics[width=\textwidth]{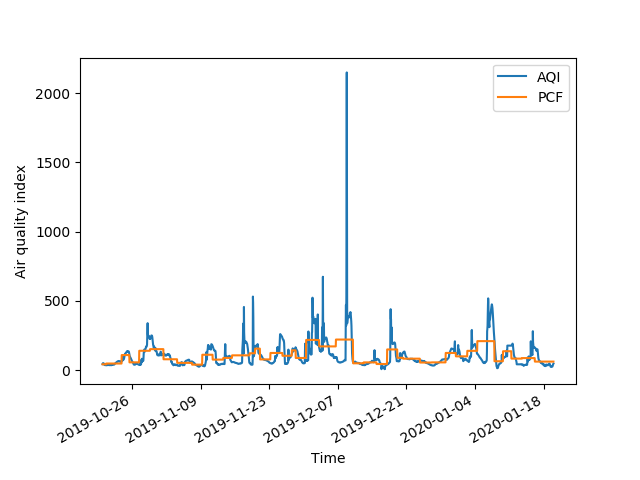}
        \caption{}
        \label{fig:Randwick}
    \end{subfigure}
    \begin{subfigure}[b]{0.49\textwidth}
        \includegraphics[width=\textwidth]{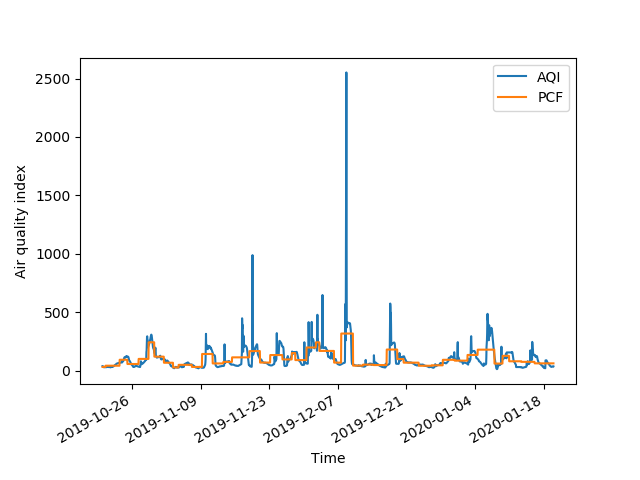}
        \caption{}
        \label{fig:Rozelle}
    \end{subfigure}
    \begin{subfigure}[b]{0.49\textwidth}
        \includegraphics[width=\textwidth]{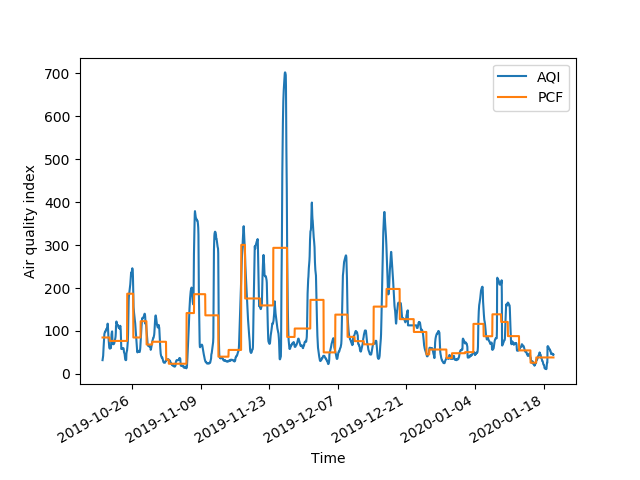}
        \caption{}
        \label{fig:Narrabri}
    \end{subfigure}
    \begin{subfigure}[b]{0.49\textwidth}
        \includegraphics[width=\textwidth]{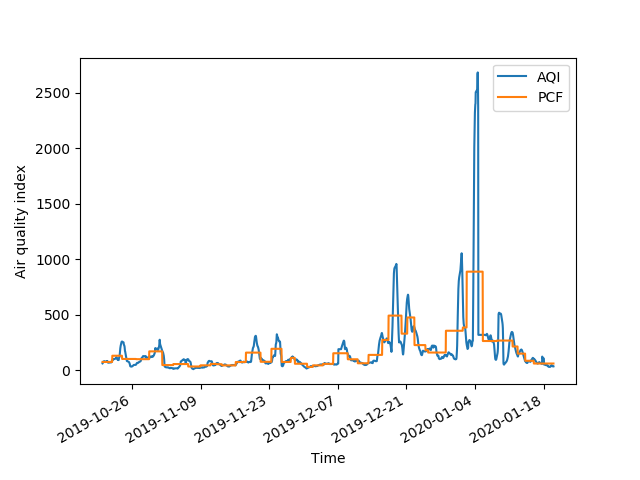}
        \caption{}
        \label{fig:WaggaWagga}
    \end{subfigure}
    \begin{subfigure}[b]{0.49\textwidth}
        \includegraphics[width=\textwidth]{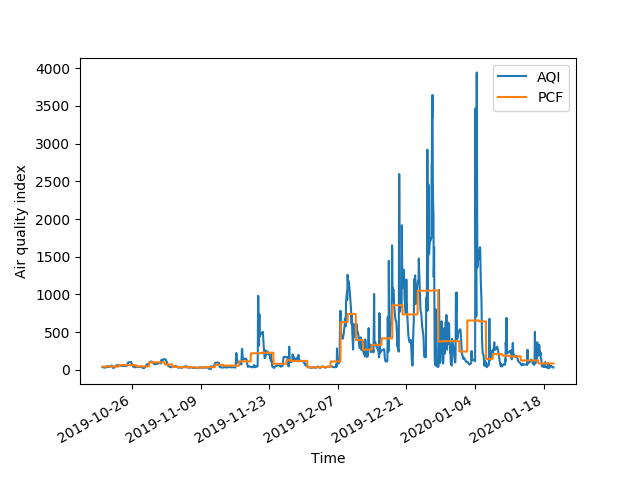}
        \caption{}
        \label{fig:Katoomba}
    \end{subfigure}
     \begin{subfigure}[b]{0.49\textwidth}
        \includegraphics[width=\textwidth]{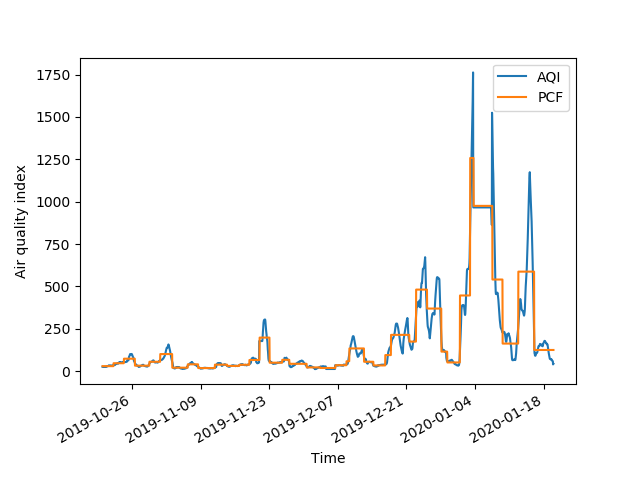}
        \caption{}
        \label{fig:Albury}
    \end{subfigure}
\caption{AQI time series and associated piecewise constant function for (a) Randwick (b) Rozelle (c) Narrabri (d) Wagga Wagga (e) Katoomba (f) Albury. We see that the two Sydney-based sites in (a) and (b) exhibited rather similar air quality profiles, with greater severity before mid-December. The more distant southern sites Wagga Wagga (d) and Albury (f) experience greater severity toward the end of the period. Curiously, Katoomba (e), very close to Sydney, resembles the distant southern sites, while Narrabri (c) resembles the Sydney sites despite being quite far to the north. We can already see some geographic irregularity in these time series.
}
\label{fig:AQITime}
\end{figure}

\subsection{Affinity matrix analysis}
\label{sec:affinitymatrixanalysis}
In this section, we use our new distance measures, via the function $\mathcal{F}_{\mu}$, to produce clusters of similarity and anomalies with respect to the air quality of different locations. As remarked in Section \ref{clustering methodology section}, there is no difference between clustering on a distance matrix or its affinity matrix, so we cluster based on affinity matrices $A^{us}$ and $A^{\text{norm}}$. These entries lie in $[0,1]$ so that diagrams display a consistent scale. These matrices each provide interesting insights, and are largely consistent with respect to their identification of similarity clusters and anomalies. 

\begin{figure}
    \centering
    \begin{subfigure}[b]{0.765\textwidth}
        \includegraphics[width=\textwidth]{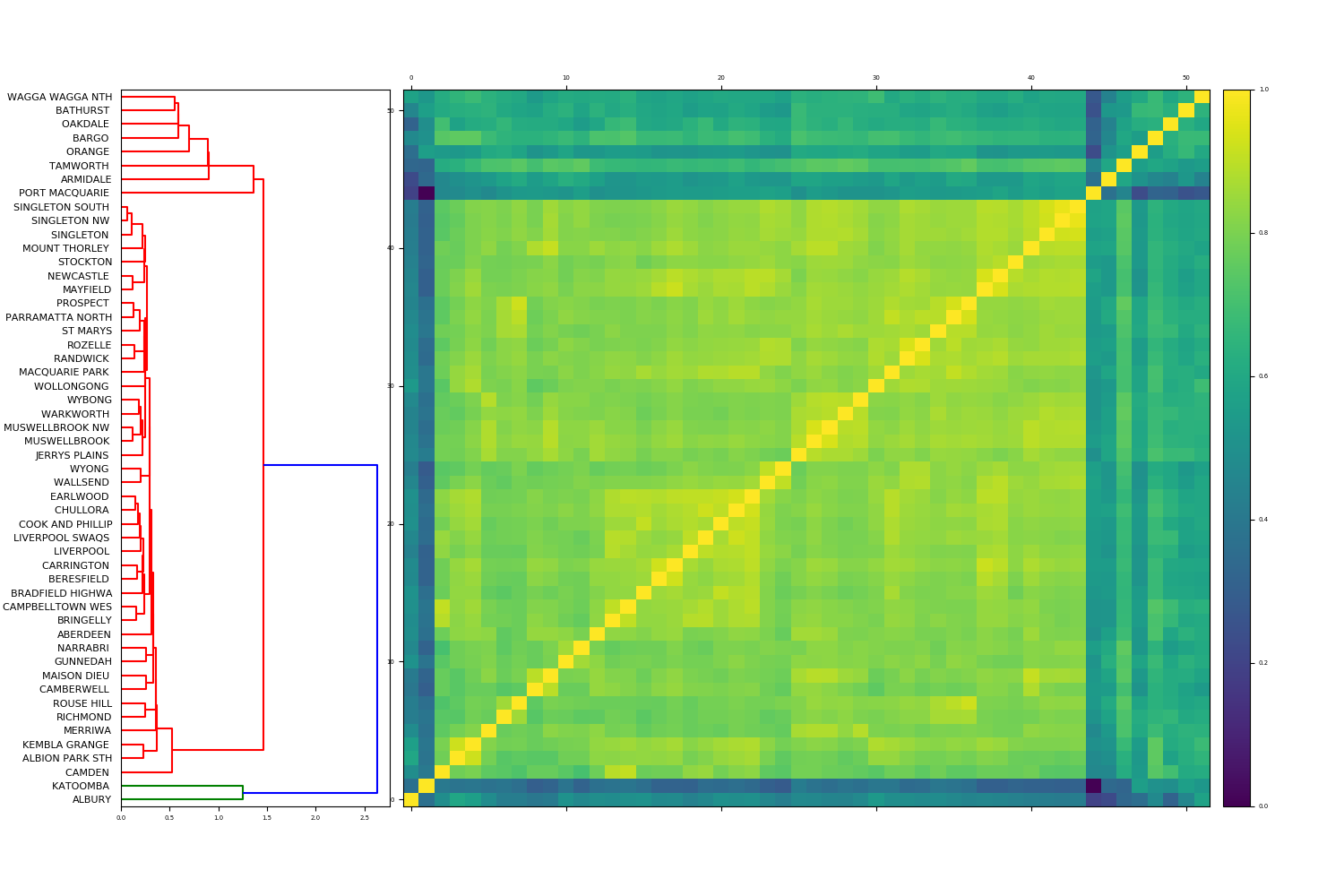}
        \caption{}
        \label{fig:UnscaledAffinity}
    \end{subfigure}
    \begin{subfigure}[b]{0.765\textwidth}
        \includegraphics[width=\textwidth]{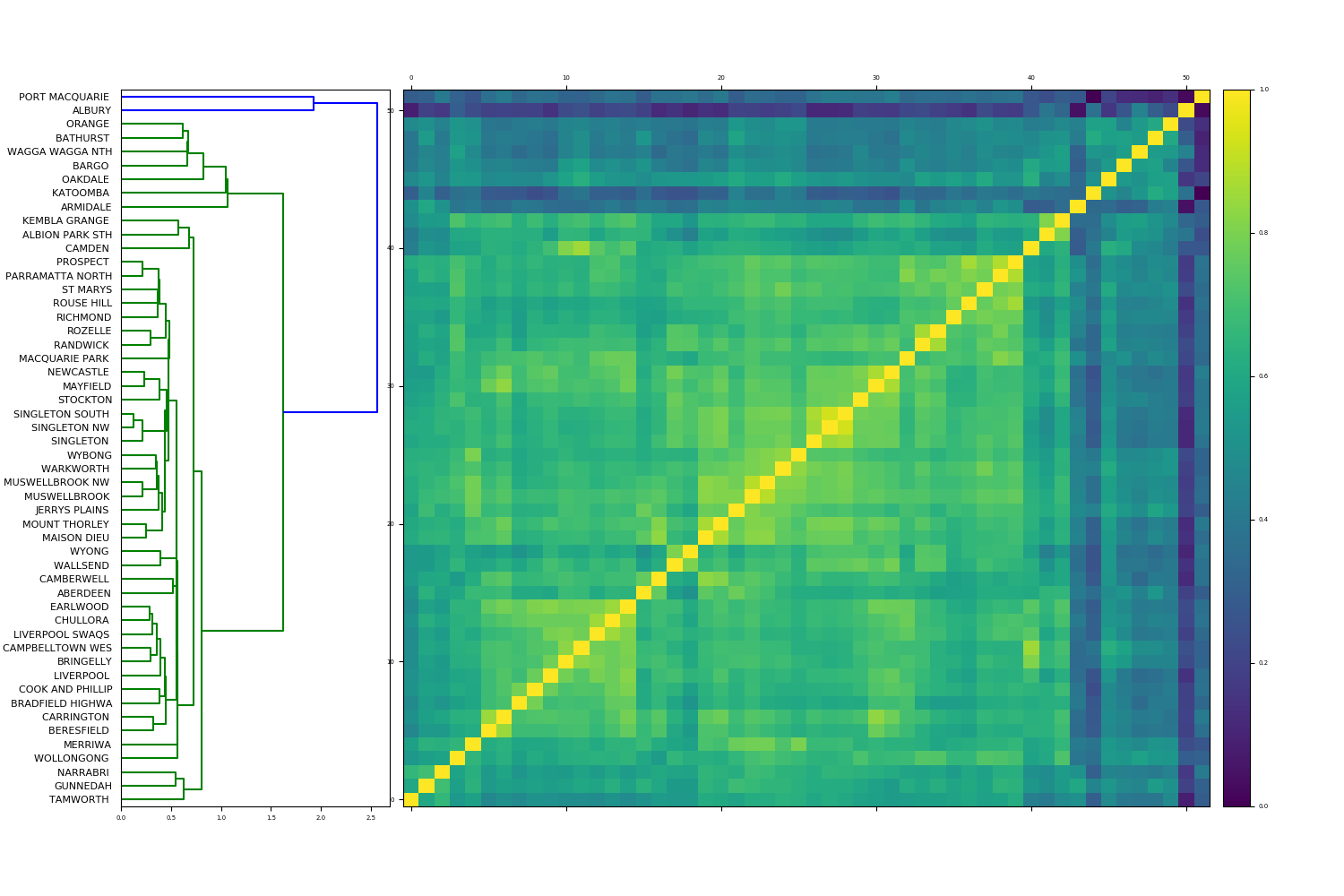}
        \caption{}
        \label{fig:NormAffinity}
    \end{subfigure}
    \begin{subfigure}[b]{0.765\textwidth}
        \includegraphics[width=\textwidth]{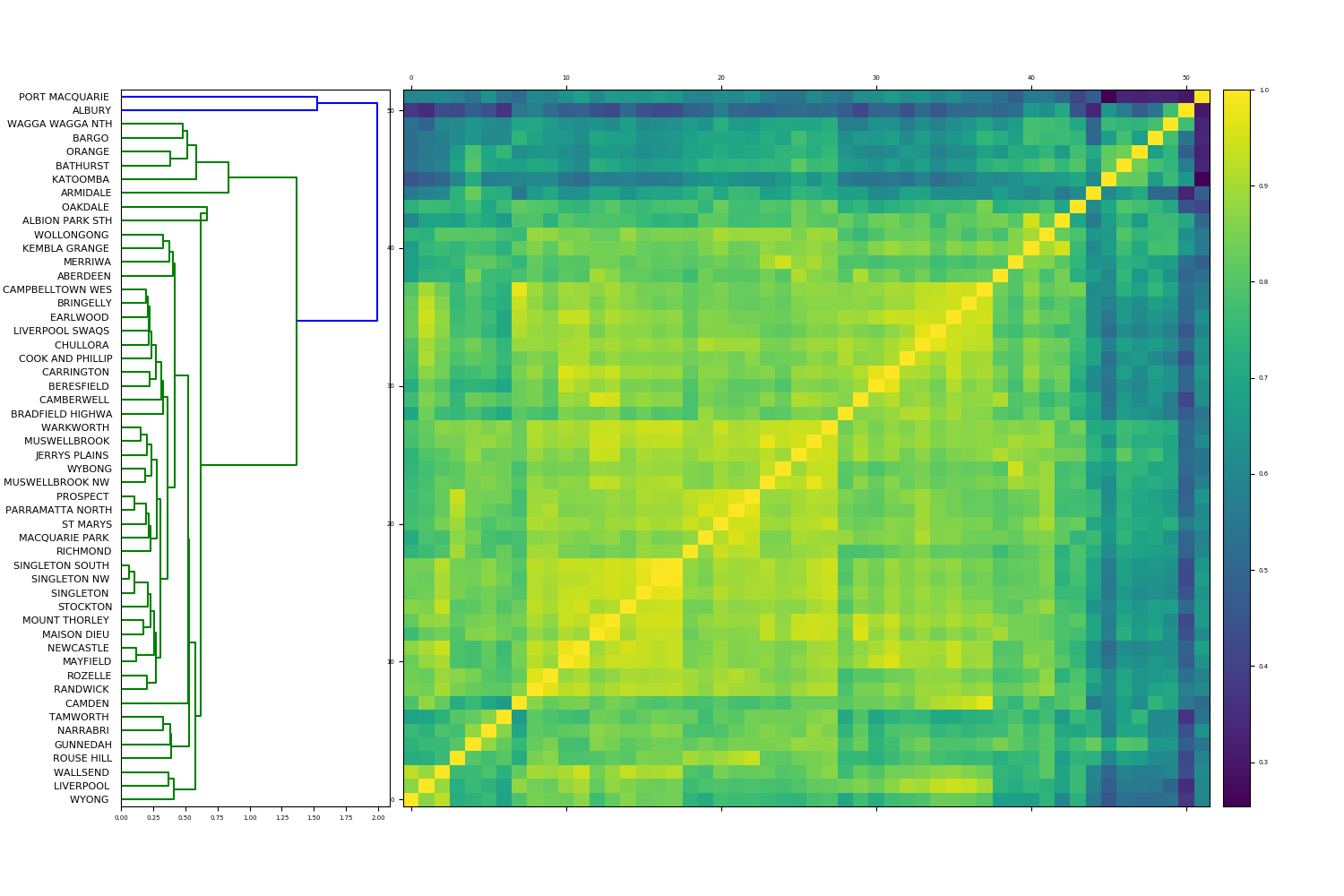}
        \caption{}
        \label{fig:Correlation}
    \end{subfigure}
\caption{Hierarchical clustering on three affinity matrices defined in Section \ref{sec:analysis_collections}, (a) $A^{us}$ the unscaled distance matrix (b) $A^{\text{norm}}$ the normalized distance matrix (c) $\Omega$ the alignment matrix. As remarked in Section \ref{sec:affinitymatrixanalysis}, Katoomba leaves the most anomalous cluster when we pass from the unscaled distances to the normalized distances. This is because it has the single greatest magnitude of all locations, so taking normalized distances nullifies this extreme feature.
}
\label{fig:DistanceMatrixDendrogram}
\end{figure}

\begin{table}
 \begin{tabular}{|c c| c c|} 
 \hline
 Measuring station & Time series magnitude & Measuring station & Time series magnitude \\
 \hline
 Aberdeen & 112.1 & Mount Thorley & 122.2\\ %[1ex] 
 \hline
 Albion Park & 89.3 & Muswellbrook & 126.0\\ %[1ex] 
 \hline
 Albury & 175.2 & Muswellbrook NW & 125.5 \\ %[1ex] 
 \hline
 Armidale & 195.5& Narrabri & 105.7\\ %[1ex] 
 \hline
 Bargo & 152.7 &   Newcastle & 112.8\\ %[1ex] 
 \hline
 Bathurst & 181.9 &  Oakdale & 196.6\\ %[1ex] 
 \hline
 Beresfield & 110.4 &  Orange & 204.2 \\ %[1ex] 
 \hline
 Bradfield Highway & 104.5&   Parramatta North & 116.1 \\ %[1ex] 
 \hline
 Bringelly & 125.6&   Port Macquarie & 202.0 \\ %[1ex] 
 \hline
 Camberwell & 139.0 & Prospect & 123.9 \\ %[1ex] 
 \hline
 Camden & 140.9&   Randwick & 104.6 \\ %[1ex] 
 \hline
 Campbelltown West & 123.5 &  Richmond & 137.7  \\ %[1ex] 
 \hline
 Carrington & 107.0&  Rouse Hill & 131.2 \\ %[1ex] 
 \hline
 Chullora & 108.9 &   Rozelle & 102.9 \\ %[1ex] 
 \hline
 Cook and Phillip & 100.7 &   Singleton & 116.9 \\ %[1ex] 
 \hline
 Earlwood & 103.0  &    Singleton NW & 116.5\\ %[1ex] 
 \hline
 Gunnedah & 104.6&  Singleton South & 113.0\\ %[1ex] 
 \hline
 Jerrys Plains & 131.3 & St Marys & 121.4\\ %[1ex] 
 \hline
 Katoomba & 258.4&  Stockton & 133.2\\ %[1ex] 
 \hline
 Kembla Grange & 97.7 &  Tamworth & 159.8 \\ %[1ex] 
 \hline
 Liverpool & 116.3 &  Wagga Wagga North & 174.7\\ %[1ex] 
 \hline
 Liverpool Swaqs & 121.8 & Wallsend & 100.0 \\ %[1ex] 
 \hline
 Macquarie Park & 105.1 &   Warkworth & 133.5\\ %[1ex] 
 \hline
 Maison Dieu & 140.7&   Wollongong & 98.7\\ %[1ex] 
 \hline
 Mayfield & 110.2 &     Wybong & 122.7\\
 \hline
Merriwa & 133.1&  Wyong & 108.3 \\
 \hline
\end{tabular}
\caption{Measuring station names and time series magnitudes $\magn$, as defined in Section \ref{sec:proofs}, reflecting an overall measure of the noise-adjusted average air quality index. We can see that Katoomba exhibits the single greatest magnitude, which contributes to its outlier status in Figure \ref{fig:UnscaledAffinity}. Curiously, the site with the third-greatest magnitude, Port Macquarie, shows up as an anomalous in Figure \ref{fig:NormAffinity} and \ref{fig:Correlation}, which used normalized distance measures.
}
\label{tab:magnitudes}
\end{table}

Figure \ref{fig:UnscaledAffinity} displays hierarchical clustering of $A^{us}$, our affinity matrix between unscaled distances. The dendrogram indicates that there is a majority cluster of time series, with two outlier elements, Katoomba and Albury, both of which had marked peaks in poor air quality, confirmed by Figures \ref{fig:Katoomba} and \ref{fig:Albury}. Both areas experienced some of their worst air quality well after other NSW measuring stations, leading to their piecewise constant function being classified as anomalous. The majority cluster contains two subgroups. The first of these is a large subcluster of highly similar time series, containing all Sydney-based measuring stations. The second subcluster consists of less pronounced anomalies, including Armidale, Bargo, Bathurst, Oakdale, Orange, Tamworth, Wagga Wagga (\ref{fig:WaggaWagga}), and Port Macquarie. All of these locations are ``regional locations'' located a significant distance from the city of Sydney. Spectral clustering results suggest that there are three clusters: one containing Albury, another containing Katoomba, and the third consisting of the remainder of the time series.

Figure \ref{fig:NormAffinity} depicts $A^{\text{norm}}$, our affinity matrix recording distances between normalized vectors. We have similar results for this matrix. Hierarchical clustering again reveals a large majority cluster, with two outliers, Port Macquarie and Albury. The majority cluster has two subgroups again. One is smaller, consisting of Armidale, Bargo, Bathurst, Katoomba (\ref{fig:Katoomba}), Oakdale, Orange, Tamworth, and Wagga Wagga (\ref{fig:WaggaWagga}), while the other subgroup contains all remaining locations. In particular, we note that after normalization, Katoomba has left the most anomalous cluster. Indeed, Katoomba is characterized by the single greatest magnitude of all locations, with $\magn(X_t) \approx 258$. We record all time series magnitudes in Table \ref{tab:magnitudes}. After normalization by the magnitude, recalling Remark \ref{remark:magnitude}, this extreme feature is nullified. The remainder of the measuring station time series, which again contain all of Sydney, are identified as highly similar. Spectral clustering on the $A^{\text{norm}}$ matrix indicates that there are two highly anomalous time series, Port Macquarie and Albury.

Third, we analyze the alignment matrix $\Omega$. Hierarchical clustering is displayed in Figure \ref{fig:Correlation}. These results are almost identical to the hierarchical clustering on $A^{\text{norm}}$, identifying Port Macquarie and Albury as primary anomalies, and Armidale, Bargo, Bathurst, Katoomba, Oakdale, Orange, Tamworth, and Wagga Wagga as secondary anomalies. The remaining stations, once again containing the entire city of Sydney, are closely clustered. Spectral clustering proposes three clusters of AQI: one cluster containing Port Macquarie, one containing Albury, and one remaining cluster consisting of the remaining locations.

\subsection{Cross-contextual analysis}
In this section, we analyze the consistency matrices defined in Section \ref{sec:analysis_collections}. These consistency matrices provide a framework for cross-contextual analysis. That is, where one matrix computes the similarity in one aspect, and another matrix computes similarity in a separate aspect, subtracting one affinity matrix from the other allows us to identify the consistency between different similarity measures across a collection of time series. This helps to place measurements between different time series within a greater context. In this instance, we analyze whether air quality is consistent with respect to geographical distance between measuring locations. Anomalies in consistency highlight either a) two areas close by way of geographical distance but dissimilar in air quality or b) two locations far apart in terms of geographical distance, but similar in their air quality. %Remark

Hierarchical clustering on the geographical affinity matrix $A^G$ (see Figure \ref{fig:ConsistencyMatrixDendrogram}) determines clusters of measuring stations in terms of their geographical proximity within the collection. A majority cluster is identified with two outliers, Wagga and Wagga, both of which are significantly further south than the other measuring stations. The remaining measuring stations fall in one cluster that appears to have two distinct collections of similarity. The first sub-group contains the city of Sydney sites, while the second sub-group includes locations outside Sydney, both to the west and north.

First, we analyze the consistency matrix associated to the unscaled distance measure, $\text{Con}^{us}$. Hierarchical clustering (Figure \ref{fig:ConsistencyUnscaled}) indicates a majority cluster with one outlier location, Katoomba. Within the majority cluster there is a small subcluster of anomalous locations including Gunnedah, Narrabri, Albury, Wagga Wagga and others. That is, Katoomba is highly anomalous relative to the relationship between air quality and distance from other locations. Indeed, Katoomba is relatively close to the city of Sydney but was befallen with terrible bushfires close by. Spectral clustering indicates that there are three clusters: the first consisting solely of Katoomba, the second consisting solely of Oakdale and the final cluster containing the remaining measuring locations. The norm of the entire consistency matrix is approximately $0.11,$ indicating broad similarity between geographical and AQI affinity.

Next, we consider the consistency matrix associated to the distances between normalized vectors, $\text{Con}^{\text{norm}}$. The hierarchical clustering results (Figure \ref{fig:ConsistencyNorm}) are highly similar to that of $\text{Con}^{us}$, with a majority cluster detected but Katoomba as a single outlier. Interestingly, there is more variance in the structure of the majority cluster. This demonstrates the utility of our plurality of measurements: others may uncover structure that one alone cannot. Katoomba's outlier status in this second experiment is by no means obvious and provides another insight: even after normalizing by its extreme magnitude, it is still anomalous relative to the consistency between geographical distance and AQI. Spectral clustering indicates that there are two anomalous locations, Katoomba and Bargo, each of which is contained exclusively within its own cluster. Finally, the matrix norm is $\approx 0.19$, larger than that of $\text{Con}^{us}$. This makes sense given less consistency overall in this matrix.

Finally, we consider the consistency matrix associated to the alignment matrix, $\text{Con}^\Omega$. Hierarchical clustering (Figure \ref{fig:ConsistencyCorrelation}) indicates that there are three clusters of locations. The first is an outlier element, Narrabri, the second cluster consists of Bargo and Katoomba, and the final majority cluster consists of the remaining measurement stations. These can be subcategorized into those located closer to Sydney and those located farther from Sydney. Spectral clustering indicates that the two most anomalous locations are Katoomba and Bargo. Given the reasonable difference in inference generated from each consistency matrix, one can appreciate the importance of using the most appropriate distance measure. The matrix norm is $\approx 0.10$, once again indicating broad similarity. Again, the fact that $\text{Con}^\Omega$ identifies Narrabri as an outlier while $\text{Con}^{us}$ does not demonstrates the utility of our range of measurements, despite the fact that these two distance methods are quite related by Remark \ref{remark:relationship}.

\begin{figure}
    \centering
    \begin{subfigure}[b]{0.49\textwidth}
        \includegraphics[width=\textwidth]{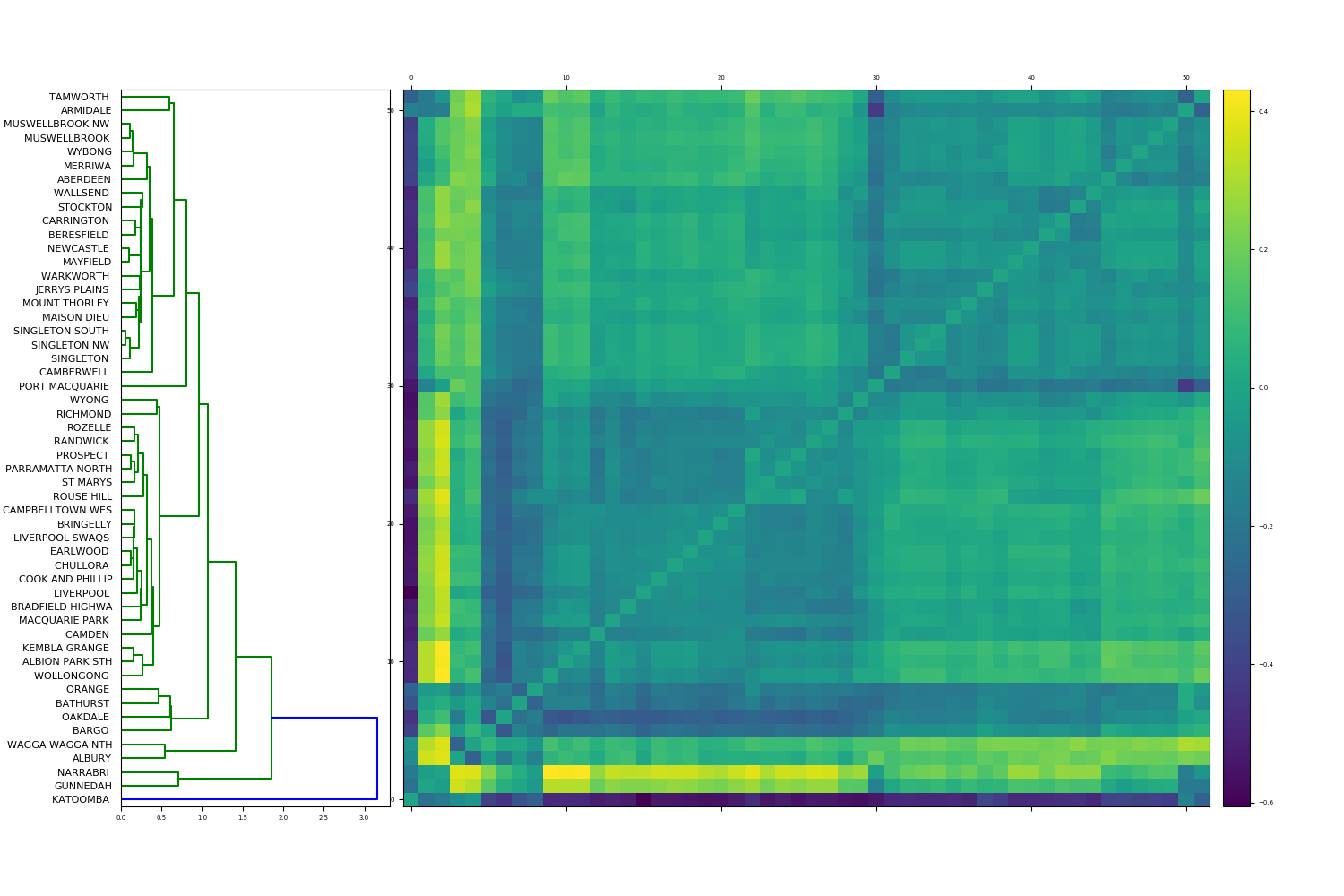}
        \caption{}
        \label{fig:ConsistencyUnscaled}
    \end{subfigure}
    \begin{subfigure}[b]{0.49\textwidth}
        \includegraphics[width=\textwidth]{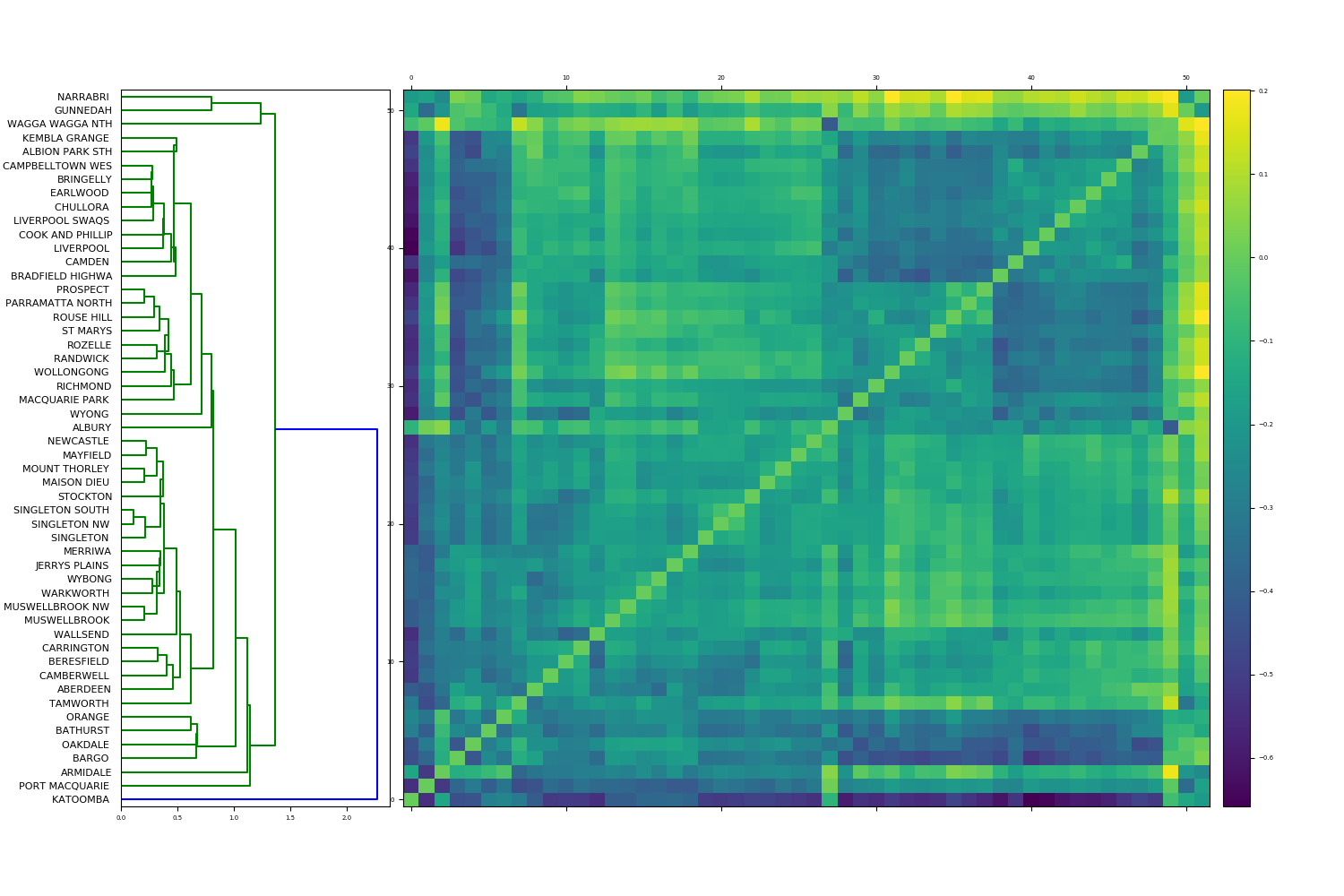}
        \caption{}
         \label{fig:ConsistencyNorm}
    \end{subfigure}
        \begin{subfigure}[b]{0.49\textwidth}
        \includegraphics[width=\textwidth]{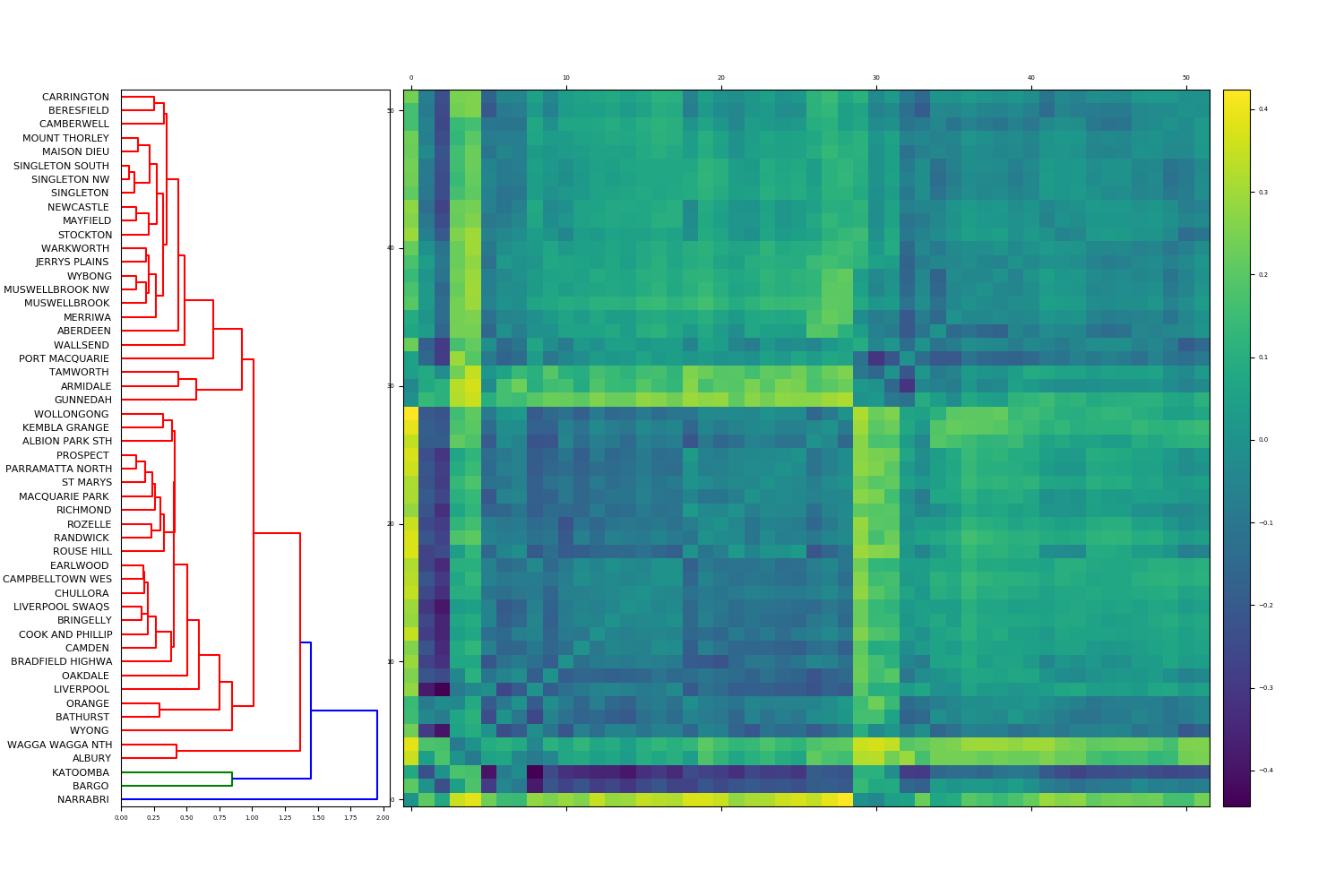}
        \caption{}
        \label{fig:ConsistencyCorrelation}
    \end{subfigure}
    \begin{subfigure}[b]{0.49\textwidth}
        \includegraphics[width=\textwidth]{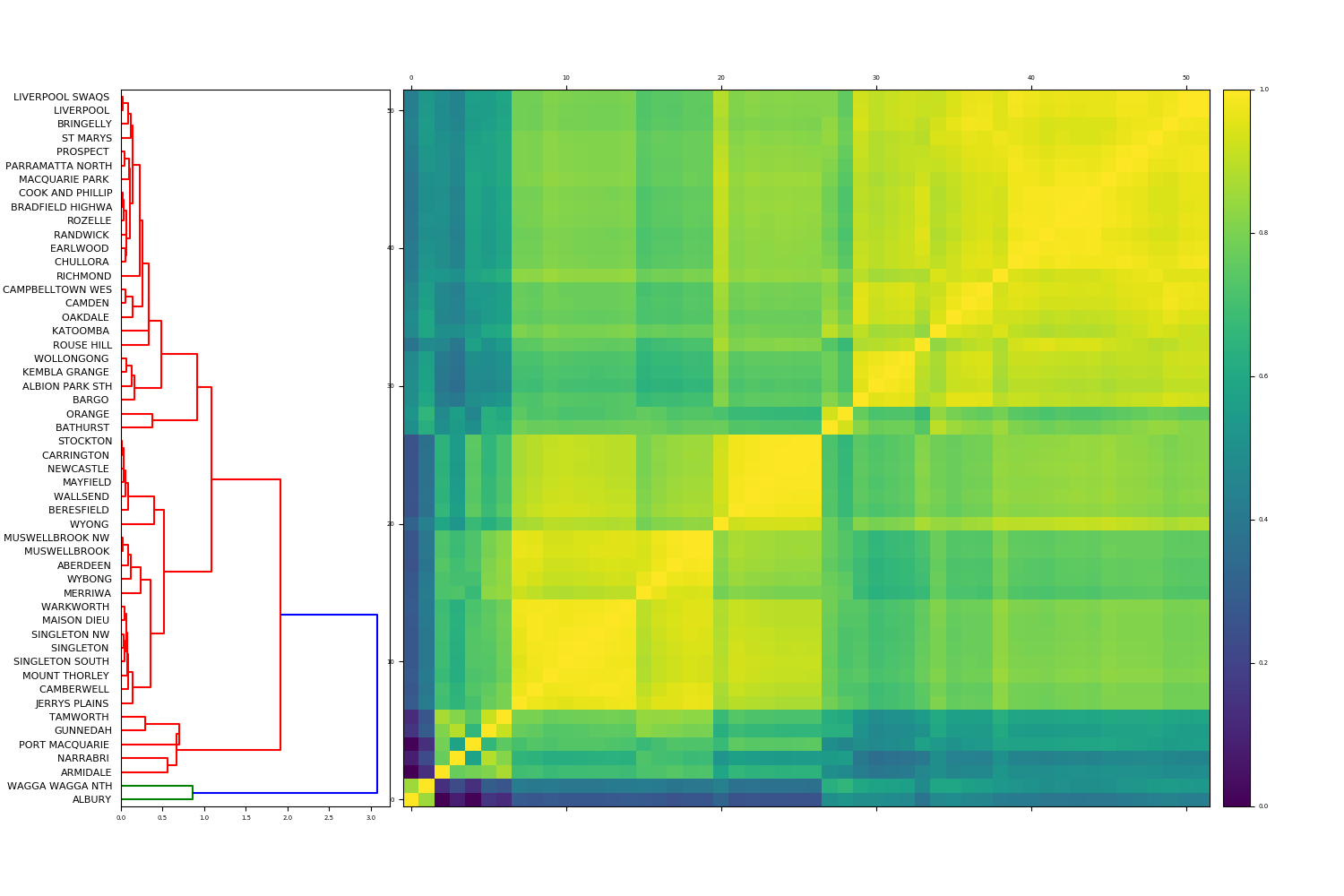}
        \caption{}
        \label{fig:Geodesicdendrogram}
    \end{subfigure}
\caption{Hierarchical clustering on three consistency matrices defined in Section \ref{sec:analysis_collections}, (a) $\text{Con}^{us}$ the unscaled consistency matrix (b) $\text{Con}^{\text{norm}}$ the normalized consistency matrix (c) $\text{Con}^{us}$  the alignment consistency matrix and (d) the geographical distance matrix $G$. Numerous insights are visible here. First, Katoomba is an outlier element for both the unscaled and normalized distances. Its outlier using the normalized distances is by no means obvious, and reflects Figure \ref{fig:AQITime} where Katoomba was highly inconsistent with geographically nearby measuring sites, not just in magnitude of the time series but in normalized profile as well. Specifically, Katoomba exhibited its greatest severity much later in the period of analysis than nearby Sydney sites. Second, Narrabri is revealed as an outlier element in (c), also reflecting what we saw in Figure \ref{fig:AQITime}. Notably, this is the first time Narrabri has been identified as an outlier anywhere in our analysis, and the fact that it is only revealed in (c) but not the closely related (b) demonstrates the utility in the plurality of measurements.
} 
\label{fig:ConsistencyMatrixDendrogram}
\end{figure}

\section{Conclusion}
\label{sec:conclusion}

We have proposed a new theoretical and practical framework for understanding equivalence and computing distances between time series. Mathematically, we aim to identify and quotient by the relation that two time series differ only up to noise, and computationally, we want to remove noise from calculations of distances $d_p$ and magnitudes $\magn$. Our procedure builds on earlier work to measure distances between time series via sets of structural breaks, and our new method produces more reliable and continuous measures. We have applied the function $\mathcal{F}_\mu$ and these metrics to analyze data with extraordinary peaks, determining piecewise constant functions $f$ designed to understand the true movement of the mean (or variance) throughout the time series. This procedure is highly flexible: we could swap variance for mean, vary our change point algorithm parameters, or use completely different change point algorithms altogether.

Several future implementation opportunities exist in this equivalence framework. One could develop and explore notions of \emph{weak equivalence} if two time series are equivalent under $\mathcal{F}_{\mu}$ or $\mathcal{F}_{var}$, and \emph{strong equivalence} if they are equivalent under both, or more strongly, if they have identical distributions within their change point segments. Moving beyond piecewise constant functions, two autoregressive progresses with the same parameters $\phi_t = A \phi_{t-1}+w_t$ and $\psi_t = A \psi_{t-1}+w'_t$ could be declared equivalent, if $w_t$ and $w'_t$ are each ``white noise'' variables, namely $N(0,\epsilon^2)$ Gaussian distributions for sufficiently small $\epsilon$. Algorithms could be developed to empirically detect such similarity in the distributions from an observed sample set, define equivalence relations, and hence produce reliable distance measures on the induced quotient space. Developing further theoretical and algorithmic ways of determining quotient spaces equipped with metrics may have many applications in determining appropriately adjusted measures between time series. Within a statistical or nonlinear dynamics context, algorithmic frameworks designed to learn appropriate notions of equivalence ``up to noise'' would also be of great interest to the community. Finally, our distance measures could be incorporated with other advances in distance analysis between time series, such as the search for various different invariances (amplitude, phase, complexity, and others) as discussed at the end of Section \ref{sec:simulation}.

In our analysis of the NSW bushfires, we have determined anomalies not just in the air quality data itself, but also those sites that stand out in the otherwise strong consistency between air quality and geographical distances. By using several different discrepancy matrices, we are able to uncover structure that one alone cannot reveal. Katoomba is a location of particular interest in our analysis. It was initially revealed as highly anomalous in our unscaled distance matrix $D^{us}$ but less so in the normalized matrix $D^{\text{norm}}$. The initial logical explanation for this is that the feature most anomalous with Katoomba was its considerable scale, reflected in the location's overall magnitude, the highest among all measuring stations, which is then nullified when normalizing distances are computed in $D^{\text{norm}}$. However, curiously, Katoomba again appears as an outlier in $\text{Con}^{\text{norm}}$, which uses normalized distances to measure consistency with geographical distance. Examining Katoomba's time series again (Figure \ref{fig:Katoomba}) we note that Katoomba experienced its greatest severity in air quality later on in our period of analysis, from mid-December onward, whereas the Sydney stations (which Katoomba is relatively close to) generally experienced severity prior to this. 

There is also utility in examining the results of a range of clustering methods. For example, only spectral clustering identifies Oakdale as an anomalous element in its own cluster when considering unscaled distances' consistency. We have also identified Narrabri as an interesting case. Unlike Katoomba, this location was never identified as a significant anomaly with respect to geography or AQI, considering distances or alignment. And yet, it is anomalous in the consistency between geographical and AQI distances. It is one of very few locations where these two measures are not closely aligned relative to other sites. Indeed, in Figure \ref{fig:Narrabri} we can see it closely resembles the profile of the Sydney sites despite being quite distant, to the north of the state. In addition, the holistic analysis of various distance and consistency matrices yields several high-level insights. First, there is broad consistency between AQI and geographical affinity matrices, which measure the relationship between sites with respect to these two aspects, as evidenced with the low consistency matrix norms in all implementations. Second, there is clearly sensitivity in the anomalies detected based on the distance measures that are used.

The mathematical analysis of relationships and anomalies between air quality measuring stations could prompt other researchers to conduct closer meteorological or geological examinations of different locations. Such research could shed light on the anomalous nature of Katoomba, Narrabri and other sites discussed in this manuscript. The fact that Sydney's air quality was much more moderate after mid-December while nearby Katoomba's became much more severe are too persistent with time to be explained by coincidences such as wind direction, and further exploration of this reasons could provide further insight on how bushfires and air pollution spread. This and other mathematical analysis could encourage researchers to notice hereto-unknown features of the land and its susceptibility or resistance to extreme bushfires, allowing better allocation of resources and understanding of vulnerability to bushfires. As such extreme events increase, we urge more interdisciplinary work between nonlinear dynamics and environmental researchers into such questions.

\section*{Data availability statement}
The data analyzed in this study are publicly available at \cite{bushfireAQIdata}. A cached copy is conveniently available at \url{https://github.com/MaxMenzies/BushfireData}.

\appendix
\section{Change point algorithm}

In this appendix, we describe the specific implementation of the change point detection algorithm used in Sections \ref{sec:analysis_collections} and \ref{sec:bushfires}. The general change point detection framework is as follows: a sequence of observations $x_1,x_2,...,x_n$ are drawn from random variables $X_1, X_2,...,X_n$ and undergo an unknown number of changes in distribution at  $\tau_1,...,\tau_m$. We assume observations are independent and identically distributed between change points, that is, between each change points a random sampling of an underlying distribution is taken. Following \cite{RossCPM}, we notate this as follows
\begin{align}
    X_{i} \sim 
    \begin{cases}
      F_{0} \text{ if } i \leq \tau_1 \\
      F_{1} \text{ if } \tau_1 < i \leq  \tau_2  \\
      F_{2} \text{ if } \tau_2 < i  \leq \tau_3,  \\
      \hdots
    \end{cases}
\end{align}
While the requirement of independence may appear restrictive, dependence can generally be accounted for by several means, such as modelling the underlying dynamics or drift process, then applying a change point algorithm to the model residuals or one-step-ahead prediction errors, as described by \cite{gustafsson2001}.

\subsection{Batch detection (phase I)}
The first phase of change point detection is retrospective. As above, we are given a finite sequence of observations $x_1,\ldots,x_n$ from random variables $X_1,\ldots,X_n$. For simplicity, we assume there is at most one change point. If a change point exists at time $k$, then observations have some distribution of $F_0$ prior to the change point, and a distribution of $F_1$ proceeding the change point, where $F_0 \neq F_1$. Thus, one must test between the following two hypotheses for each $k$: 

\begin{align}
    H_0: X_{i} \sim F_0, i = 1,...,n
\end{align}

\begin{align}
    H_1: X_{i} \sim 
    \begin{cases}
      F_{0} & i = 1,2,...,k \\
      F_{1}, & i = k + 1, k+2, ..., n  \\
    \end{cases}
\end{align}
and select the most suitable point $k$, if one exists.

One proceeds with a two-sample hypothesis test, where the choice of test depends on assumptions about the underlying distributions. Nonparametric hypothesis tests can be chosen to avoid distributional assumptions. One appropriately chooses a two-sample test statistic $D_{k,n}$ and a threshold parameter $h_{k,n}$. If $D_{k,n}>h_{k,n}$ then the null hypothesis is rejected and one provisionally assumes that a change point has occurred at $x_k$. These test statistics $D_{k,n}$ are normalized to have mean $0$ and variance $1$ and evaluated at all values $k=1,...,n$; the largest test statistic is assumed to be coincident with the existence of our sole change point. The overall test statistic is then
\begin{align}
    D_{n} = \max_{k=2,...,n-1} D_{k,n} = \max_{k=2,...,n-1} \Bigg| \frac{\Tilde{D}_{k,n} - \mu_{\Tilde{D}_{k,n}}}{\sigma_{\Tilde{D}_{k,n}}}  \Bigg|
\end{align}
where $\Tilde{D}_{k,n}$ are non-normalized statistics.

The null hypothesis of no change point is rejected if $D_{n} > h_n$ for an appropriately chosen threshold $h_n$. In this case, we conclude that a (single) change point has occurred and its location is the value of $k$ that maximizes $D_{k,n}$. That is,
\begin{align}
    \hat{\tau} = \argmax_k D_{k,n}.
\end{align}
This threshold $h_n$ is chosen to bound the type 1 (false positive) error rate as is commonplace in statistical hypothesis testing. One specifies an acceptable level $\alpha$ for the proportion of false positives, the probability of falsely declaring that a change has occurred when in fact it has not. Then, $h_n$ is chosen as the top $\alpha$ quantile of the distribution of $D_n$ under the null hypothesis \cite{RossCPM}. 

%The computational cost of this first phase is $O(n)$, where $n$ is the number of observations. Indeed, this calculates and compares $n-2$ values of the (normalized) test statistics $D_{k,n}, k=2,...,n-1$. This is implemented efficiently in C\texttt{++}.

\subsection{Sequential detection (phase II)}
In this phase, the sequence $(x_t)_{t \geq 1}$ does not have a fixed length. New observations are continually received over time, and multiple change points may exist. Assuming no change point exists so far, this phase treats $x_1,..., x_t$ as a fixed-length sequence and computes $D_t$ as described in phase I. A change is detected if $D_t > h_t$ for an appropriately chosen threshold. If no change is detected, the next observation $x_{t+1}$ is brought into the sequence $x_1,...,x_t,x_{t+1}$ for consideration and the process continues. If a change is detected, the process restarts from the data point immediately following the detected change point. Thus, the procedure consists of a repeated sequence of hypothesis tests.

In this sequential setting, $h_t$ is selected so that the probability of incurring a Type 1 (false positive) error is constant over time, so that under the null hypothesis of no change point, the following holds:
\begin{align}
    P(D_1 > h_1) &= \alpha,\\
    P(D_t > h_t | D_{t-1} \leq h_{t-1}, ... , D_{1} \leq h_{1}) &= \alpha, \ t > 1.
\end{align}
In this case, assuming that no change point occurs, the expected number of observations received before a false positive detection occurs is $\frac{1}{\alpha}$. This quantity is known as the average run length, or ARL$_0$ \cite{RossCPM}.

\section{Cluster theory}
In this section, we briefly describe the two methods of clustering used throughout this paper. First, \emph{hierarchical clustering} \cite{Ward1963,Szekely2005} is an iterative clustering technique that seeks to build a hierarchy of similarity between elements. In this paper we use agglomerative hierarchical clustering with the average linkage method \cite{Mllner2013}, where each element (in our case a location) begins in its own cluster and branches between them are successively built. The results of hierarchical clustering are commonly displayed in \emph{dendrograms}, as we display in this paper. Unlike spectral or K-means clustering, hierarchical clustering does not specify a discrete partition of elements, and does not require the choice of a number of clusters $k$. Like spectral clustering, hierarchical clustering can be implemented on any distance, not necessarily Euclidean space.

Spectral clustering applies a graph theoretic interpretation of a distance matrix $D$, projecting data into a lower-dimensional space, the eigenvector domain, where it may be more easily separated by standard algorithms such as $K$-means. Following \cite{vonLuxburg2007}, we transform a distance matrix $D$ into an affinity matrix $A$ as we do throughout the paper (\ref{eq:affinitydefn}). Then we define the graph Laplacian by
\begin{equation}
    L = E - A, 
\end{equation}
where $E$ is the diagonal degree matrix with diagonal entities  $E_{ii} = \sum_{j} A_{ij}$ and zero off-diagonal entries. $A$ and hence $L$ are real symmetric matrices, so can be diagonalized with all real eigenvalues  (the spectral theorem, \cite{Axler}). By construction, $L$ is \emph{positive semi-definite} with eigenvalues $0=\lambda_{1} \leq \lambda_{2} \leq ... \leq \lambda_{n}.$

Spectral clustering proceeds as follows: with $k$ chosen \textit{a priori}, we find corresponding eigenvectors $f_{1}, f_{2},...,f_{k}$ and construct the $n \times k$ matrix $F$ whose columns are $f_i,i=1,\dots,k$. Let $v_j \in \mathbb{R}^k$ be the rows of $F, j=1,\dots,n$. We apply standard K-means \cite{Lloyd1982} to cluster these rows into clusters $C_1,...,C_k$. Finally, we determine clusters $A_l=\{i: v_i \in C_l \}, l=1,...,k$ to assign the original $n$ elements into the corresponding clusters. Spectral clustering can be applied for any value of $k$, but a common choice of $k$ is to maximize the eigengap between successive eigenvalues $\lambda_{k} - \lambda_{k-1}$.

\bibliographystyle{_elsarticle-num-names}
\bibliography{__newrefs.bib}
\biboptions{sort&compress}
%\end{nolinenumbers}
\end{document}